\patchcmd{\ps@pprintTitle}{\footnotesize\itshape
Preprint submitted to \ifx\@journal\@empty Elsevier
\else\@journal\fi\hfill\today}{\relax}{}{}
\newtheorem{theorem}{Theorem}
\newtheorem{lemma}{Lemma}
\newtheorem{definition}{Definition}
\newtheorem{remark}{Remark}
\newtheorem{proposition}{Proposition}
\newtheorem{condition}{Condition}
\newcommand{\starcupN}{\square\kern-0.58em\mbox{\scriptsize{$?$}}}
\journal{European Journal of Operational Research}
\begin{document}

\begin{frontmatter}
    \title{{\sc{Electre-Score}}: A first outranking based method for scoring actions}
    \author[ist]{Jos\'e Rui {\sc Figueira}\corref{cor1}}\ead{figueira@tecnico.ulisboa.pt}
    \author[deb,corl]{Salvatore {\sc Greco}}
    \author[lamsade]{Bernard {\sc Roy}}
    \address[deb]{Department of Economics and Business, University of Catania, Catania, Italy}
    \address[lamsade]{Universit\'e Paris-Dauphine, PSL Research University, CNRS (UMR 7243), LAMSADE, Paris, France}
    \address[ist]{CEG-IST, Instituto Superior T\'{e}cnico,  Universidade de Lisboa, Lisbon, Portugal}
    \address[corl]{Portsmouth Business School, Centre of Operations Research and Logistics (CORL), \\ University of Portsmouth,  Portsmouth, United Kingdom}
    \cortext[cor1]{Corresponding author at: CEG-IST, Instituto Superior T\'ecnico, Universidade de Lisboa, Av. Rovisco Pais 1,
    1049-001, Lisboa, Portugal. Phone: +351 21 841 77 29. Fax: +351 21 841 79 79. Date: May 15, 2019.}

    \begin{abstract}
        \noindent  In this paper we present (to the best of our knowledge) the first outranking method to assign a score to each alternative. It is a method of the {\sc{Electre}} family, and we will call it {\sc{Electre-Score}}. Contrarily to the Multi-Attribute Value Theory methods, {\sc{Electre-Score}} does not construct a value function for each criterion, and then proceeds to the aggregation into a single value. It rather, makes use of the outranking relations to make a comparison with reference sets of actions (to which a score is assigned) and proposes a score range to each alternative, instead of a single value. This is a more robust way of proceeding given the fragility of a single score. The sets of limiting profiles are defined with the help of {\sc{Electre Tri-nB}} and the reference scores are assigned to them through the application of the deck of cards technique. The fact of being able to use outranking relations, makes it also possible to take into account the imperfect knowledge of data and avoids systematic compensatory effects. Some fundamental theoretical results guaranteing the consistency of the method and an illustrative example are also provided in this paper.
    \end{abstract}
    \vspace{0.25cm}
    \begin{keyword}
        Multiple criteria analysis \sep {\sc{Electre}} methods \sep Scoring methods \sep Outranking relations \sep Decision support systems.
    \end{keyword}
\end{frontmatter}

%%\vfill\newpage
%%\tableofcontents
%%\vfill\newpage

\section{Introduction}\label{sec:Introduction}
\noindent Multiple criteria decision aiding (MCDA) is a discipline that comprises methods and techniques used to produce information in order to enable decision-makers to make better and informed decisions. Over the last years there was a tremendous growth in the development of new methods, strengthening the maturity of the existing ones, and in increasing their application to deal with real-world decision aiding situations of a crucial importance for organizations. One of the most magnificent and wonderful features of these methods is their diversity \citep[see][for a vast panoply od methods]{GrecoEtAl16}.

The three major families of methods can be divided by taking into account the form of the results provided to the decision-makers. There are scoring based methods, which assign a score to each action or alternative \citep{KeeneyRa93}, the outranking based methods that provide one or several outranking relations between ordered pairs of alternatives and exploit such relations \citep{RoyBo93}, and there are also rule based approaches wich provide the decision-makers with a set of decision rules \citep{GrecoEtAl01}.

{\sc{Electre}} methods \citep{FigueiraEtAl16} play a central role in the family of outranking approaches. Since their inception in the middle of the sixties of last century they were object of several studies, extensions, generalizations, new developments, and many applications in real-world situations. For a comprehensive survey of these methods see \cite{GovindanJe16}. The most relevant features of {\sc{Electre}} methods that make them adequate to deal with several situations are the following \citep{FigueiraEtAl13}: (1) they can deal with both the quantitative and the qualitative nature of the criteria scales; (2) the scales can be of very heterogeneous types (meters, noisy, delay, costs, return, etc); (3) the compensatory effects are not relevant in a systematic way (this is mainly due to the use of the concordance index and the existence of veto thresholds that avoid some compensability); (4) they are able to take into account the imperfect knowledge of data (uncertainty, imprecision, and ill-determination) and the arbitrariness when building the criteria; and, finally, (5) they are very adequate to take into account the reasons for and the reasons against an outranking.

Of course, as all the MCDA methods, the ones from {\sc{Electre}}  family are not perfect and also suffered from some drawbacks: (1) the intransitivity phenomenon may occur (and even worse, it may be quite frequent); (2) the phenomenon of the dependence with respect to irrelevant alternatives may be present; (3) if all the criteria are of a quantitative nature and no imperfect knowledge and arbitrariness are present, and in addition the decision-maker allows for a systematic compensation, we can do better with other methods, namely the ones of Multi-Attribute Value Theory (MAVT) family; and (4) if it is necessary assigning a score to each action, these methods are note adequate at all.

In reality the compensation between criteria is not always allowed by decision-makers and imperfect knowledge and arbitrariness are often present when dealing with practical decision aiding situations. Since it will be almost impossible to avoid the two other phenomena (intransitivities and dependence), would it be possible to build a scoring based {\sc{Electre}} method? This is the challenge we would like to face and the objective of this paper is thus to propose such a method.

The new method consists of following main steps:

\begin{enumerate}
  \item Several reference sets of actions are built. We used the limiting profiles as in {\sc{Electre-Tri-nB}} \citep{FernandezEtAl17} as our reference sets. Such sets of reference actions must fulfill some important separability conditions. Note that, whichever is the procedure to define \emph{a priori} the set of reference actions, it must satisfy a certain number of conditions that should be as weak as possible in order the method to be fruitfully applied in a vast generality of cases. To start, we consider rather restrictive conditions that must be relaxed as much as possible in the following but only if these less restrictive conditions would maintain the validity of the method.
  \item With a deck of cards technique, we can  assign a value to each set of limiting profiles after choosing two reference values. This is a similar technique as the one proposed in \cite{BotteroEtAl18} for building interval scales.
  \item The last step consists of comparing each action to the reference sets and assigning a scoring interval to it.
\end{enumerate}

It should be remarked that the nature of this scoring method is different from those belonging to the MAVT family, since we do not consider a transformation of the scale for each criterion into a partial value function, and then aggregate all the partial values into a single one. The score obtained with our procedure supplies a comprehensive evaluation without using a procedure for converting each criterion into a value function.

The paper is organized as follows. Section 2 is devoted to some fundamental concepts, their definitions, and corresponding notation. Section 3 presents the new {\sc{Electre-Score}} method, including the conditions for the construction of the reference set, the assignment of a scoring range to each action, and the formal definitions of the lower and upper bounds of such a range. Section 4 is related to the conditions about the set of limiting profiles that allow the procedure to be in accordance with the objectives. Section 5 is devoted to the theoretical results proving desirable properties of the procedure. Section 6 presents the method by means of an illustrative example along with some practical aspects. Finally, the last section provides the main conclusions and some lines for future research.

\section{Concepts, definitions, and notation}\label{sec:Concepts}

\noindent To start we need to introduce a few notation. Let $A = \{a_1,\ldots,a_i,\ldots,a_m\}$ denote the set of actions to which an interval score must be assigned to each of them, $G = \{g_1,\ldots,g_j,\ldots,g_n\}$ denote the set of criteria used to assess the performance of such actions, and  $g_j(a_i)$ denotes the performances of action $a_i$ on criterion $g_j$ (with all the performances we can build a performance table). Consider also the collection of sets of references actions $B = \{B_{x_1},\ldots,B_{x_k},\ldots,B_{x_\ell}\}$, for which a score is previously defined, and let $X = \{x_1,\ldots,x_k,\ldots,x_\ell\}$ the set of such scores. Each set is composed of at least one limiting profile as in {\sc{Electre Tri-nB}} \citep[see][]{FernandezEtAl17}, i.e., $B_{x_k} = \{b_{k1}, \ldots, b_{kp}, \ldots, b_{kp_k}\}$. As part of {\sc{Electre}} methods a credibility degree between all ordered pairs of actions, $\sigma(a,b)$, must be computed. This credibility measures on a scale $[0,1]$ the degree in which action $a$ outranks action $b$. In order to pass from a fuzzy relation to a crisp one, we need to define what is called the  cutting level $\lambda \in ]0.5,1]$. A brief description of the way {\sc{Electre}} methods compute the degree of credibility is available in the Appendix of this paper.

The next three definitions are fundamental.

\begin{definition}[Dominance]\label{def:Dominance}
    Consider two actions $a$ and $b$. Action $a$ \emph{dominates} action $b$, whenever $g_j(a) \geqslant g_j(b)$, for all $j=1,\ldots,n$,  with at least one strict inequality. Let $a\Delta b$ denote such a binary dominance relation.
\end{definition}

\begin{definition}[Fundamental outranking binary relation]\label{def:BinaryOutrakingRelations} Consider two actions $a$ and $b$ from set $A$. Once fixed the cutting level $\lambda$, we say that action $a$ outranks (or is at least as good as) action $b$, denoted by $a\succsim^\lambda b$ iff $\sigma(a,b) \geqslant \lambda$. It is easy to see that $\succsim^\lambda$ is a reflexive, but not necessarily symmetric and transitive binary relation. In what follows and with some abuse of the mathematical language, we will use simply $a\succsim b$ instead of $a\succsim^\lambda b$ for denoting this $\lambda-$outranking binary relation (the same applies for the binary relations introduced in next definition).
\end{definition}

\begin{definition}[Derived binary relations]\label{def:DerivedBinaryRelations} From the fundamental outranking binary relation $\succsim$, we can derive, for two actions $a,b \in A$, the following three binary relations (which correspond to all possible combinations of the presence and non-presence of an outranking relation between $a$ and $b$, and $b$ and $a$, respectively).
    %%\begin{itemize}[label={--}]
    \begin{enumerate}[label=\roman*)]
        \item  $a\succ b$ (\emph{preference} in favor of $a$, which means that $a$ is preferred to $b$) iff $a\succsim b$ and $\mbox{not}(b\succsim a)$;
        \item  $b\succ a$ (\emph{preference} in favor of $b$, which means that $b$ is preferred to $a$) iff $b\succsim a$ and $\mbox{not}(a\succsim b)$;
        \item  $a\sim b$ (\emph{indifference}, which means that actions, $a$ and $b$, are indifferent) iff $a\succsim b$ and $b\succsim a$;
        \item  $a\parallel b$ (\emph{incomparability}, which means that actions, $a$ and $b$, are incomparable) iff $\mbox{not}(a\succsim b)$ and $\mbox{not}(b\succsim a)$.
    \end{enumerate}
    ($\succ$ is irreflexive and asymmetric; $\sim$ is reflexive and symmetric; and, $\parallel$ is irreflexive and symmetric.)
\end{definition}

\begin{remark}\label{rem:PropertiesBinaryRelations}
    From Definitions \ref{def:BinaryOutrakingRelations} and \ref{def:DerivedBinaryRelations}, it is easy to see that $a\succsim b$ implies, either $a\succ b$ or $a\sim b$. Taking into account the dominance relation of Definition \ref{def:Dominance}, the following properties hold.
    \begin{subequations}
    \renewcommand{\theequation}{\theparentequation.\arabic{equation}}
        \begin{align}
            a \Delta b                                 \Rightarrow  a\succsim b \label{eqn:Rem1_1}\\
            a \succsim b \; \mbox{and} \; b \Delta c   \Rightarrow  a\succsim c \label{eqn:Rem1_2}\\
            a \Delta b   \; \mbox{and} \; b \succsim c \Rightarrow  a\succsim c \label{eqn:Rem1_3}\\
            a \succ b    \; \mbox{and} \; b \Delta c   \Rightarrow  a\succ c    \label{eqn:Rem1_4}\\
            a \Delta b   \; \mbox{and} \; b \succ c    \Rightarrow  a\succ c    \label{eqn:Rem1_5}
        \end{align}
    \end{subequations}
\end{remark}

\section{{\sc{Electre-Score}}}\label{sec:ElectreScore}
\noindent This section provides the basic foundations of the {\sc{Electre-Score}} method, i.e., the necessary elements for the construction of a reference set, the conditions needed for assigning a score range to each action, and the formal definition of the lower and upper bounds of such a range. Most of the material presented in this section is closely related to the {\sc{Electre Tri-nB}} method \citep[see][]{FernandezEtAl17}.

    \subsection{Constructing a reference set}\label{sec:ReferenceSet}
    \noindent The definition of the reference set as well as the basic assumption with respect to such reference set are presented next. The subsection also provided some more results in the same line as in \cite{FernandezEtAl17}.

    \begin{definition}[Set of reference actions]\label{def:ReferenceActions}
        Let $X=\{x_1,\ldots,x_k,\ldots,x_\ell\}$ denote the set of values considered as references scores, and $B_{x_k}=
        \{b_{k1}, \ldots, b_{kp}, \ldots, b_{kp_k}\}$ denote the set of reference actions used to characterize score $x_k$. As
        a result $B = \bigcup_{k}^{\ell}B_{x_k}$ denotes a set containing all the reference actions.
    \end{definition}

    \begin{condition}[Basic assumptions]\label{cond:BasicAssumptions}
        The score $x_k$ is characterized by a set of reference actions, $B_{x_k} = \{b_{x_{k}1}, \ldots, b_{x_{k}p}, \ldots, b_{x_{k}p_k}\}$, for $k = 1,\ldots,\ell$, such that:
        %%\begin{itemize}[label={--}]
        \begin{enumerate}[label=\roman*)]
            \item For all  $b_{kp}, b_{kq} \in B_{x_k}$ there is no preference between $b_{kp}$ and $b_{kq}$ (this implies, there is only the possibility to have either $b_{kp} \sim b_{kq}$ or $b_{kp} \parallel b_{kq}$);
            \item For all  $b_{kp} \in B_{x_k}$ and $b_{hq} \in B_{x_h}$ ($x_k > x_h$), it is not possible to have $b_{hq} \succ b_{kp}$.
        \end{enumerate}
    \end{condition}

    \begin{definition}[Relations between an action and a reference set]\label{def:RelationsActionsSets}
        Consider the following relations between an action $a$, and a set of reference actions, $B_{x_k}$ \citep[see][]{FernandezEtAl17}.
        %%\begin{itemize}[label={--}]
        \begin{enumerate}[label=\roman*)]
            \item $a\succsim B_{x_k}$ iff, for all $b_{kq} \in B_{x_k}$, either $a\parallel b_{kq}$ or $a\succsim b_{kq}$, the latter relation being fulfilled by at least one $b_{kq} \in B_{x_k}$ (note that, for all $b_{kq} \in B_{x_k}$, it is not possible to have $b_{kq}\succ a$);
            \item $B_{x_k}\succsim a$ iff, for all $b_{kq} \in B_{x_k}$, either $b_{kq}\parallel a$ or $b_{kq}\succsim a$, the latter relation being fulfilled by at least one $b_{kq} \in B_{x_k}$ (note that, for all $b_{kq} \in B_{x_k}$, it is not possible to have $a\succ b_{kq}$);
            \item $a\succ B_{x_k}$ iff, for all $b_{kq} \in B_{x_k}$, either $a\parallel b_{kq}$ or $a\sim b_{kq}$, or $a\succ b_{kq}$, the latter relation being fulfilled by at least one $b_{kq} \in B_{x_k}$ (note that, for all $b_{kq} \in B_{x_k}$, it is not possible to have $b_{kq}\succ a$);
            \item $B_{x_k}\succ a$  iff, for all $b_{kq} \in B_{x_k}$, either $b_{kq}\parallel a$ or $b_{kq}\sim a$, or $b_{kq}\succ a$, the latter relation being fulfilled by at least one $b_{kq} \in B_{x_k}$ (note that, for all $b_{kq} \in B_{x_k}$, it is not possible to have $a\succ b_{kq}$);
            \item $a\sim B_{x_k}$ iff, for all $b_{kq} \in B_{x_k}$, either $a\parallel b_{kq}$ or $a\sim b_{kq}$, the latter relation being fulfilled by at least one $b_{kq} \in B_{x_k}$ (note that, since $\sim$ is symmetric, for all $b_{kq} \in B_{x_k}$, it is not possible to have $b_{kq}\succ a$ or $a\succ b_{kq}$);
            \item $a\parallel B_{x_k}$ iff, for all $b_{kq} \in B_{x_k}$, either $a\parallel b_{kq}$ or, when $a\succ b_{kq}$, for some $b_{kq} \in B_{x_k}$, $b_{kp}\succ a$, for some $b_{kp} \in B_{x_k}$, with $b_{kq}\neq b_{kp}$ (note that, since $\sim$ is symmetric, it is not possible to have $B_{x_k}\succsim a$ or $a\succsim B_{x_k}$).
        \end{enumerate}
    \end{definition}

    \begin{remark}\label{rem:PropertiesRelationsSetAction}
        The following implications can be derived from Definitions \ref{def:Dominance} and \ref{def:RelationsActionsSets} \citep[see][]{FernandezEtAl17}.
        %%%\begin{itemize}[label={--}]
        \begin{enumerate}[label=\roman*)]
            \item $a\succ B_{x_k}$ implies $a\succsim B_{x_k}$;
            \item $a\succ B_{x_k}$ implies $\mbox{not}(B_{x_k}\succsim a)$, and consequently $a\succ B_{x_k}$ also implies $\mbox{not}(B_{x_k}\succ a)$;
            \item  $B_{x_k}\succ a$ implies $\mbox{not}(a\succsim B_{x_k})$, and consequently $B_{x_k}\succ a$ also implies $\mbox{not}(a\succ B_{x_k})$;
            \item $B_{x_k}\succ a$ implies $B_{x_k}\succsim a$;
            \item $B_{x_k}\succ a$ and $a\Delta b$ implies $B_{x_k}\succ b$;
            \item $a\Delta b$ and $b\succsim B_{x_k}$ implies $a\succsim B_{x_k}$;
            \item $a\Delta b$ and $b\succ B_{x_k}$ implies $a\succ B_{x_k}$.
        \end{enumerate}
    \end{remark}

    \subsection{Assigning a score range to each action}\label{sec:Scores}
    \noindent The aim of the ELECTRE score method we are presenting is the identification of a range, $]s^l(a), s^u(a)[$ for the score $s(a)$ to be assigned to actions $a \in A$.  Let us first discuss some conditions that guarantee the existence of such a range $]s^l(a), s^u(a)[$. Since we only know the relations between $a$ and the elements of set $B$ (see Definition \ref{def:RelationsActionsSets}), the lower bound, $s^l(a)$, and the upper bound, $s^u(a)$, of the range cannot be fixed \emph{a priori}. However, it is easy to see that we cannot have $B_{x}\succ a$, for any score $x$ lower than or equal to the lower bound, $s^l(a)$. Otherwise, the lower bound was not properly defined and we could move down (decrease) its value, which makes no sense. Analogously, it is easy to see that we cannot have $a \succ B_{x}$ for any score $x$ greater than or equal to the upper bound, $s^u(a)$. Otherwise, the upper bound was not properly defined and we could move up (increase) its value, which makes no sense. This reasoning led us to establish the following two necessary conditions for the existence of the range  $]s^l(a), s^u(a)[$.

    \begin{condition}[Lower bound necessary condition]\label{cond:LowerBoundNecCond}
        If $x \leqslant s^l(a)$, then $\mbox{not}(B_{x}\succ a)$, for all $a \in A$.
    \end{condition}

    \begin{condition}[Upper bound necessary condition]\label{cond:UpperBoundNecCond}
         If $x \geqslant s^u(a)$, then $\mbox{not}(a\succ B_{x})$, for all $a \in A$.
    \end{condition}

    These two conditions are necessary for the existence of the range, but they are not sufficient, since we know nothing about the relations between $B_x$ and $a$, for an $x$ value strictly comprised within the range $]s^l(a), s^u(a)[$. However, it is easy to see that we cannot have $B_{x}\succ a$, for any score $x$ strictly greater than the lower bound, $s^l(a)$. Otherwise, the lower bound was not properly defined and we could move up (increase) its value, which makes no sense. Analogously, it is easy to see that we cannot have $a\succ B_{x}$, for any score $x$ strictly lower than the upper bound, $s^u(a)$. Otherwise, the upper bound was not properly defined and we could move down (decrease) its value, which makes no sense. This reasoning led us to establish the following two sufficient conditions for the existence of the range  $]s^l(a), s^u(a)[$.

    \begin{condition}[No active preference condition of reference sets in the score range]\label{cond:LowerBoundSufCond}
        If $s^l(a) < x < s^u(a)$, then $\mbox{not}(B_{x}\succ a)$, for all $a \in A$.
    \end{condition}

    If this condition was violated, no score $s(a)$, such that $s^l(a) \leqslant s(a) < s^u(a)$,  could be justified.

    \begin{condition}[No passive preference condition of reference sets in the score range]\label{cond:UpperBoundSufCond}
         If $s^l(a) < x < s^u(a)$, then $\mbox{not}(a\succ B_{x})$, for all $a \in A$.
    \end{condition}

    Analogously, if this condition was violated, no score $s(a)$, such that $s^l(a) < s(a) \leqslant s^u(a)$,  could be justified.

    It is easy to see that Conditions \ref{cond:LowerBoundSufCond} and \ref{cond:UpperBoundSufCond} are fulfilled if and only if the next (equivalent) condition holds.

    \begin{condition}[No preference condition of reference sets in the interval]\label{cond:UpperBoundCond}
        For all possible scores, $x$, such that $s^l(a) < x < s^u(a)$, $\mbox{not}(B_{x}\succ a)$ and $\mbox{not}(a \succ B_{x})$, for all $a \in A$.
    \end{condition}

    This condition means that assigning to action $a$, any feasible score, i.e, any score $x$ such that $s^l(a) < x < s^u(a)$, is only possible when, either $a \sim B_x$ or $a\parallel B_x$.

    Observe that Conditions \ref{cond:LowerBoundNecCond} and \ref{cond:LowerBoundSufCond} can be replaced by the following condition.

    \begin{condition}[Lower bound general condition]\label{cond:NonPreferenceActionCond}
         If  $x > s^l(a)$, then $\mbox{not}(a\succ B_{x})$, for all $a \in A$.
    \end{condition}

    Analogously, Conditions \ref{cond:UpperBoundNecCond} and \ref{cond:UpperBoundSufCond} can be replaced by the following condition.

    \begin{condition}[Upper bound general condition]\label{cond:NonPreferenceSetCond}
         If  $x < s^u(a)$, then $\mbox{not}(B_{x}\succ a)$, for all $a \in A$.
		\end{condition}

    \begin{remark}\label{rem:NecSufConditions}
         Conditions \ref{cond:LowerBoundNecCond} to \ref{cond:UpperBoundSufCond} do not imply by no means that the next condition is automatically fulfilled.
    \end{remark}

    \begin{condition}[Indifference/incomparability]\label{cond:NonPreferenceCond}
         If   $a\sim B_{x}$ or $a \parallel B_{x}$, then $s^l(a) < x < s^u(a)$, for all $a \in A$.
    \end{condition}

Figure 1 illustrates the above conditions. Notation $\mathbf{[Ck]}$ is used instead of Condition $k$, for $k=3,4,5,6,7,8,9$.

     \vspace{-0.5cm}

    \begin{center}
    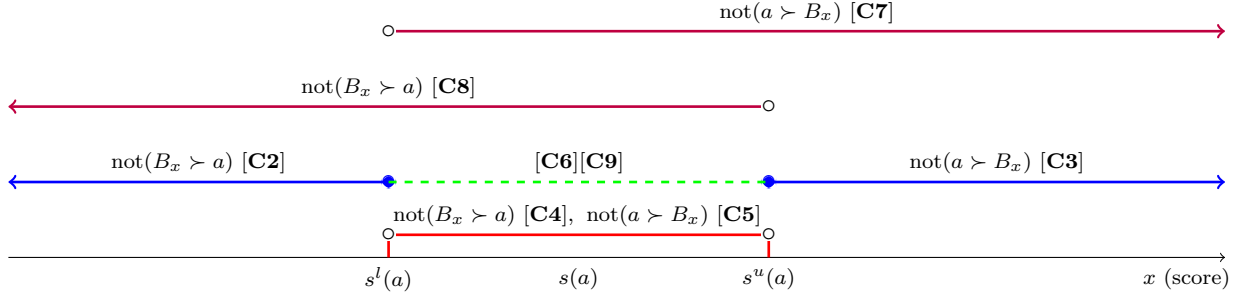
\begin{figure}[htb!]\label{fig:NecSufConditions}
    \begin{tikzpicture}[x=1cm,y=1cm,scale=1]
        %%\draw[gray!50, thin, step=1] (0,-1) grid (15,2);
        \draw[->] (0,0) -- (16,0);
        \node[below] at (15.5,0) {\scriptsize $x$ (score)};
        \node[below] at (7.5,0) {\scriptsize $s(a)$};
        \draw[-,line width=1pt, red] (5,0.3) -- (10,0.3);
        \draw[-,line width=1pt, red] (5,0.3) -- (5,0);
        \draw[-,line width=1pt, red] (10,0.3) -- (10,0);
        \node[inner sep=0, fill=white] at (5, 0.3) {$\circ$};
        \node[inner sep=0, fill=white] at (10,0.3) {$\circ$};
        \node[below] at (5,0) {\scriptsize $s^l(a)$};
        \node[below] at (10,0) {\scriptsize $s^u(a)$};
        \node[] at (13,1.25) {\scriptsize $\mbox{not}(a\succ B_x)$ $\mathbf{[C3]}$};
        \draw[->,line width=1pt, blue] (10,1) -- (16,1);
        \node[inner sep=0, rounded corners, color=blue, fill=blue] at (10, 1) {$\circ$};
        \node[] at (2.5,1.25) {\scriptsize $\mbox{not}(B_x\succ a)$  $\mathbf{[C2]}$};
        \draw[<-,line width=1pt, blue] (0,1) -- (5,1);
        \node[inner sep=0, rounded corners, color=blue, fill=blue] at (5, 1) {$\circ$};
        \node[] at (6.25,0.55) {\scriptsize $\mbox{not}(B_x\succ a)$  $\mathbf{[C4]}$,};
        \node[] at (8.75,0.55) {\scriptsize $\mbox{not}(a\succ B_x)$  $\mathbf{[C5]}$};
        \draw[-,dashed, line width=1pt, green] (5,1) -- (10,1);
        \node[] at (7.5,1.25) {\scriptsize $\mathbf{[C6][C9]}$};
        \node[] at (5,2.25) {\scriptsize $\mbox{not}(B_x\succ a)$ $\mathbf{[C8]}$};
        \draw[<-,line width=1pt, purple] (0,2) -- (10,2);
        \node[inner sep=0, fill=white] at (10, 2) {$\circ$};
        \node[] at (10.5,3.25) {\scriptsize $\mbox{not}(a\succ B_x)$ $\mathbf{[C7]}$};
        \draw[->,line width=1pt, purple] (5,3) -- (16,3);
        \node[inner sep=0, fill=white] at (5, 3) {$\circ$};
    \end{tikzpicture}
    \caption{Illustration of the necessary and no active and passive preference conditions for the existence of a range for $s(a)$, $]s^l(a), s^u(a)[$}
    \end{figure}
    \end{center}

    \vspace{-0.5cm}

\subsection{Definitions of the lower and upper bounds}\label{sec:Bounds}
\noindent The formal definitions of the lower and upper bound are as follows.

    \begin{definition}[Lower bound of the score range]\label{def:LowerBound}
         The $s^l(a)$ value is the highest value, such that $a \succ B_x$ and $\mbox{not}(B_{x}\succ a)$, for all $x < s^l(a)$.
    \end{definition}

    \begin{definition}[Upper bound of the score range]\label{def:UpperBound}
         The $s^u(a)$ value is the lowest value, such that $B_x\succ a$ and $\mbox{not}(a\succ B_{x})$, for all $x > s^u(a)$.
    \end{definition}

    \begin{remark}\label{rem:OnDefinitions}
         It is obvious that Definitions \ref{def:LowerBound} and \ref{def:UpperBound} ensure that \ref{cond:LowerBoundNecCond} and \ref{cond:UpperBoundNecCond} are automatically fulfilled.
    \end{remark}

\section{Finding the conditions on $B$ allowing the procedure to be in accordance with the objectives}\label{sec:TheoreticalResults}
\noindent This section presents the conditions of the set $B$ that render the procedure or the new method coherent, i.e., in accordance with the objectives the method have been designed for. From these conditions a set of new results needed to be proved.

    \begin{condition}[Separability conditions]\label{cond:Separability}
        The following are the required separability conditions on set $B$.
        \begin{enumerate}[label={\alph{enumi}}]
            \item \emph{Dominance based separability conditions}.
                \begin{enumerate}[label={\alph{enumi}.\arabic*}]
                    \item \emph{Strong dominance}. Consider $x_h > x_k$. For any two reference actions, $b_{hq} \in B_{x_h}$ and $b_{kp} \in B_{x_k}$, the relation $b_{hq}\Delta b_{kp}$ holds.
                    \item \emph{Soft dominance}.
                        \begin{itemize}
                            \item[$a.2.p$] (\emph{primal}): Consider $x_h > x_k$. For all $b_{kp} \in B_{x_k}$, there is at least one $b_{hq} \in B_{x_h}$ such that $b_{hq}\Delta b_{kp}$.
                            \item[$a.2.d$] (\emph{dual}): Consider $x_h > x_k$. For all $b_{hq} \in B_{x_h}$, there is at least one $b_{kp} \in B_{x_k}$ such that $b_{hq}\Delta b_{kp}$.
                        \end{itemize}
                \end{enumerate}
            \item \emph{Preference based separability conditions}.
                \begin{enumerate}[label={\alph{enumi}.\arabic*}]
                    \item \emph{Strong preference}.  For any two reference actions, $b_{hq} \in B_{x_h}$ and $b_{kp} \in B_{x_k}$, the relation $b_{hq}\succ b_{kp}$ holds.
                    \item \emph{Soft preference}.
                        \begin{itemize}
                            \item[$b.2.p$] (\emph{primal}): Consider $x_h > x_k$. For all $b_{kp} \in B_{x_k}$, there is at least one $b_{hq} \in B_{x_h}$ such that $b_{hq}\succ b_{kp}$.
                            \item[$b.2.d$] (\emph{dual}): Consider $x_h > x_k$. For all $b_{hq} \in B_{x_h}$, there is at least one $b_{kp} \in B_{x_k}$ such that $b_{hq}\succ b_{kp}$.
                        \end{itemize}
                \end{enumerate}
        \end{enumerate}
    \end{condition}
    \begin{proposition}[Comparisons of the actions against reference sets]\label{prop:PropertiesActionsSets}
        ~~~~
        \begin{enumerate}[label=\roman*)]
            \item If the primal soft dominance separability condition holds for $B$, then, for all $a \in A$, $a\succsim B_{x_k}$ implies $\mbox{not}(B_{x_h}\succ a)$, for all $x_h < x_k$;
            \item If the primal soft dominance separability condition holds for $B$, then, for all $a \in A$, $B_{x_k}\succ a$ implies $\mbox{not}(a \succsim B_{x_h})$, for all $x_h > x_k$;
            \item If both the primal and the dual soft dominance separability condition hold for $B$, then, for all $a \in A$, $a\succsim B_{x_k}$ implies $(a \succsim B_{x_h})$, for all $x_h < x_k$;
            \item If both the primal and the dual soft dominance separability condition hold for $B$, then, for all $a \in A$, $B_{x_k}\succ a$ implies $(B_{x_h}\succ a)$, for all $x_h > x_k$;
            \item If the dual soft dominance separability condition holds for $B$, then, for all $b_{kp} \in B_{x_k}$, $b_{kp}\succsim B_{x_h}$, for all $x_h \leqslant x_k$;
            \item  If the primal soft dominance separability condition holds for $B$, then, for all $b_{kp} \in B_{x_k}$, $B_{x_h}\succ b_{kp}$, for all $x_h > x_k$.
        \end{enumerate}
    \end{proposition}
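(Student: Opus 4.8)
The plan is to prove all six items by one uniform mechanism. In each case I would first \emph{unfold} the set-level relation appearing in the hypothesis, via Definition~\ref{def:RelationsActionsSets}, into statements about the pairwise relations between the action (or reference profile) at hand and the individual profiles of the other set; then invoke the appropriate clause of the soft-dominance condition (Condition~\ref{cond:Separability}) to exhibit, for a suitable profile, a profile of the other set that dominates it (the \emph{primal} clause, which covers every profile of the lower-score set) or is dominated by it (the \emph{dual} clause, which covers every profile of the higher-score set); and finally propagate the relation across that dominance link using the implications of Remark~\ref{rem:PropertiesBinaryRelations}. I take the basic assumptions of Condition~\ref{cond:BasicAssumptions} on $B$ to hold throughout.

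I would dispose first of the two single-direction statements i) and ii). For i), assume $a\succsim B_{x_k}$, so by the definition of that relation no profile of $B_{x_k}$ is strictly preferred to $a$. If $B_{x_h}\succ a$ held for some $x_h<x_k$, the definition of $B_{x_h}\succ a$ would supply a $b_{hq}$ with $b_{hq}\succ a$; the primal clause (higher level $x_k$, lower level $x_h$) then gives $b_{kp}\in B_{x_k}$ with $b_{kp}\Delta b_{hq}$, and \eqref{eqn:Rem1_5} yields $b_{kp}\succ a$, a contradiction. Statement ii) is the mirror image: from $B_{x_k}\succ a$ pick $b_{kp}\succ a$, use the primal clause (now with higher level $x_h>x_k$) to obtain $b_{hq}\Delta b_{kp}$, and apply \eqref{eqn:Rem1_5} to get $b_{hq}\succ a$, which contradicts the requirement of $a\succsim B_{x_h}$ that no such $b_{hq}$ exist.

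For the two-sided statements iii)--vi) I would check separately the two requirements of the relevant clause of Definition~\ref{def:RelationsActionsSets}: the universal ``no reverse (strict) relation'' part and the existential ``at least one'' part. In iii), the no-reverse part repeats the contradiction of i) through the primal clause and \eqref{eqn:Rem1_5}, while the existence of $b_{hq}$ with $a\succsim b_{hq}$ follows from a witness $a\succsim b_{kp}$ by the dual clause ($b_{kp}\Delta b_{hq}$) and \eqref{eqn:Rem1_2}. Statement iv) is dual: the no-reverse part excludes $a\succ b_{hq}$ using the dual clause and \eqref{eqn:Rem1_4}, and the strictly preferred $b_{hq}\succ a$ is produced from a witness $b_{kp}\succ a$ of $B_{x_k}\succ a$ by the primal clause and \eqref{eqn:Rem1_5}. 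In v), the no-reverse part (no $b_{hq}\succ b_{kp}$) is exactly assumption ii) of Condition~\ref{cond:BasicAssumptions} when $x_h<x_k$, and assumption i) together with reflexivity of $\succsim$ when $x_h=x_k$; the existence of $b_{hq}$ with $b_{kp}\succsim b_{hq}$ comes from the dual clause and \eqref{eqn:Rem1_1} (taking $b_{hq}=b_{kp}$ in the equality case).

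The delicate point, and the step I expect to be the main obstacle, is the existential part of vi). Its no-reverse part (no $b_{kp}\succ b_{hq}$) is again immediate from assumption ii) of Condition~\ref{cond:BasicAssumptions}, but concluding $B_{x_h}\succ b_{kp}$ requires some $b_{hq}\in B_{x_h}$ with a \emph{strict} preference $b_{hq}\succ b_{kp}$. The primal clause only delivers a $b_{hq}$ with $b_{hq}\Delta b_{kp}$, and \eqref{eqn:Rem1_1} upgrades this merely to $b_{hq}\succsim b_{kp}$: nothing in Remark~\ref{rem:PropertiesBinaryRelations} or Condition~\ref{cond:BasicAssumptions} excludes $b_{kp}\sim b_{hq}$ for a dominating profile. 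This is in sharp contrast to i)--iv), where the strict preferences to be transported are already handed to us by the set-level hypotheses (a $b_{hq}\succ a$ or a $b_{kp}\succ a$), so that \eqref{eqn:Rem1_4}--\eqref{eqn:Rem1_5} apply directly; in vi) the strictness must instead be \emph{manufactured} from dominance alone. I would close this gap by arguing that a reference profile dominating $b_{kp}$ with at least one strict inequality (Definition~\ref{def:Dominance}) is in fact strictly preferred to it, so that $b_{hq}\Delta b_{kp}$ gives $b_{hq}\succ b_{kp}$; equivalently, one may observe that under the primal soft \emph{preference} condition of Condition~\ref{cond:Separability} the witness $b_{hq}\succ b_{kp}$ is available outright, which is the natural hypothesis for this existential conclusion.
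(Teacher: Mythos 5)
Your mechanism is exactly the paper's: the authors prove only point $iv)$ (and a companion statement), and they do it precisely by unfolding the set-level relation, invoking $a.2.p$/$a.2.d$ to produce a dominating or dominated profile, and transferring the strict preference across the dominance link via \eqref{eqn:Rem1_4}--\eqref{eqn:Rem1_5}; your argument for $iv)$ is word-for-word theirs, and your treatments of $i)$, $ii)$, $iii)$ and $v)$ are correct instances of the same scheme (including the correct observation that $v)$ only needs $\succsim$, so \eqref{eqn:Rem1_1} plus Point $ii)$ of Condition~\ref{cond:BasicAssumptions} suffices).

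Your worry about $vi)$ is well founded, and it is worth being explicit about where you and the paper part ways. The paragraph the authors label as ``the proof of point $vi)$'' in fact proves a different statement, namely that $a\succ B_{x_h}$ implies $a\succ B_{x_k}$ for $x_k<x_h$ under both $a.2.p$ and $a.2.d$ (the downward companion of $iv)$, which is the form actually invoked later in the paper); it never produces the strict witness $b_{hq}\succ b_{kp}$ that Definition~\ref{def:RelationsActionsSets}$(iv)$ requires for $B_{x_h}\succ b_{kp}$. As you say, primal soft dominance only yields $b_{hq}\Delta b_{kp}$, hence $b_{hq}\succsim b_{kp}$ by \eqref{eqn:Rem1_1}, and nothing in Condition~\ref{cond:BasicAssumptions} rules out $b_{hq}\sim b_{kp}$ (a dominance falling within the indifference thresholds does exactly this), in which case $B_{x_h}\succ b_{kp}$ fails. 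Of your two proposed repairs, the first --- that dominance with a strict inequality entails strict preference --- is \emph{not} available in this framework: Remark~\ref{rem:PropertiesBinaryRelations} deliberately records only $\Delta\Rightarrow\succsim$, and upgrading it would contradict the whole point of the discrimination thresholds. The second repair, assuming the primal soft \emph{preference} condition $b.2.p$ of Condition~\ref{cond:Separability}, is the right one (it is exactly the hypothesis the authors do add when they need such strict witnesses in the Conformity theorem). So: points $i)$--$v)$ are fine and on the paper's route; point $vi)$ is a defect of the proposition as stated, not of your argument, provided you discard your first patch and keep the second.
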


    \begin{proof}
      We shall prove only point $iv)$ and $vi)$ that will be used several times in the proofs of other propositions in the paper. The other points have similar proofs. Analogous results have been obtained with respect to sorting procedures in \citep{FernandezEtAl17}.
			
			If $B_{x_h}\succ a$ then there exist $b_{hp} \in B_{x_h}$ such that $b_{hp} \succ a$. Since $k >h$, by Condition $a.2.p$ (primal), there exist $b_{kq} \in B_{x_k}$ such that $b_{kq} \Delta b_{hp}$ and, consequently, $b_{kq} \succ a$. By contradiction, let us suppose that there exist $b_{kr} \in B_{x_k}$ such that $a \succ b_{kr}$. By Condition  $a.2.d$ (dual), there should exist $b_{hs} \in B_{x_h}$, such that $b_{kr} \Delta b_{hs}$. Therefore we would get $a \succ b_{hs}$, which would imply not $(B_{x_h} \succ a)$ contradicting the hypothesis $B_{x_h} \succ a$. This concludes the proof of point $iv)$.

If $a \succ B_{x_h}$ then there exist $b_{hp} \in B_{x_h}$ such that $a \succ b_{hp}$. Since $h >k$, by Condition $a.2.d$ (dual), there exist $b_{kq} \in B_{x_k}$ such that $b_{hp} \Delta b_{kq}$ and, consequently, $a \succ b_{kq}$. By contradiction, let us suppose that there exist $b_{kr} \in B_{x_k}$ such that $b_{kr} \succ a$. By Condition  $a.2.p$ (primal), there should exist $b_{hs} \in B_{x_h}$, such that $b_{hs} \Delta b_{kr}$. Therefore we would get $b_{hs} \succ a$, which would imply not $(a \succ B_{x_h})$ contradicting the hypothesis $a \succ B_{x_h}$. This concludes the proof of point $vi)$.
			    \end{proof}

    \begin{proposition}[Upper and lower bound sufficiency]\label{prop:SoftDominanceSufficiency}
         If $B$ fulfills both the primal and the dual soft dominance separability conditions (see Condition \ref{cond:Separability}, points   $a.2.p$ and $a.2.d$), then Conditions \ref{cond:LowerBoundSufCond} and \ref{cond:UpperBoundSufCond} are also fulfilled.
    \end{proposition}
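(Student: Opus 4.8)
The plan is to treat the two conditions symmetrically, each through a contradiction argument that pits the extremal (highest/lowest) characterisation of the bound against a monotonicity property of the set-versus-action preference relations along the score axis. The only ingredient doing real work is this monotonicity, and it is precisely there that both the primal and the dual soft dominance conditions are consumed; everything else is bookkeeping with Definitions \ref{def:LowerBound} and \ref{def:UpperBound} together with the elementary implications recorded in Remark \ref{rem:PropertiesRelationsSetAction}.

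First I would settle Condition \ref{cond:LowerBoundSufCond}. Fix $a \in A$ and suppose, for contradiction, that there is a score $x^\ast$ with $s^l(a) < x^\ast < s^u(a)$ and $B_{x^\ast}\succ a$. Under the primal and dual soft dominance conditions, Proposition \ref{prop:PropertiesActionsSets} (point iv)) supplies the upward closure $B_{x^\ast}\succ a \Rightarrow B_{x}\succ a$ for every $x > x^\ast$; moreover $B_x\succ a$ forces $\mbox{not}(a\succ B_x)$ by Remark \ref{rem:PropertiesRelationsSetAction} (point iii)). Hence the defining property of the upper bound in Definition \ref{def:UpperBound} — ``$B_x\succ a$ and $\mbox{not}(a\succ B_x)$ for all $x$ above the value'' — already holds from $x^\ast$ upward. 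Since $x^\ast < s^u(a)$, this contradicts $s^u(a)$ being the \emph{lowest} value with that property, which establishes $\mbox{not}(B_{x^\ast}\succ a)$.

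Condition \ref{cond:UpperBoundSufCond} is then handled by the mirror-image argument. Assuming $a\succ B_{x^\ast}$ for some $s^l(a) < x^\ast < s^u(a)$, I would invoke the dual (downward-closure) counterpart of Proposition \ref{prop:PropertiesActionsSets}, namely $a\succ B_{x^\ast} \Rightarrow a\succ B_x$ for all $x < x^\ast$, again valid under both soft dominance conditions, together with $a\succ B_x \Rightarrow \mbox{not}(B_x\succ a)$ from Remark \ref{rem:PropertiesRelationsSetAction} (point ii)). These two facts show that the defining property of the lower bound in Definition \ref{def:LowerBound} holds for all $x < x^\ast$, so $x^\ast$ is an admissible value for $s^l(a)$; as $x^\ast > s^l(a)$ this contradicts the maximality of $s^l(a)$, giving $\mbox{not}(a\succ B_{x^\ast})$.

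The single genuinely substantive step is the monotonicity of these two relations in the score, i.e. Proposition \ref{prop:PropertiesActionsSets} (point iv)) and its symmetric counterpart, and it is there — and only there — that I expect both directions of soft dominance to be needed at once: the primal condition propagates the existence of a strictly preferring (respectively strictly dominated) reference action to the neighbouring set, while the dual condition is what rules out the reverse strict preference that would otherwise destroy the set-level relation in the sense of Definition \ref{def:RelationsActionsSets}. Once that lemma is granted, the rest is a routine ``push the bound'' contradiction, requiring no delicate estimate or case analysis beyond the combinations already enumerated in Definition \ref{def:RelationsActionsSets}.
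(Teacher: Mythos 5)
Your argument is correct, but it is organized somewhat differently from the paper's. The paper also argues by contradiction, yet it contradicts the assumed relation itself: from $x_h<s^u(a)$ and $B_{x_h}\succ a$ it extracts (via Definition~\ref{def:UpperBound}) a witness $x_k>x_h$ with $a\succ B_{x_k}$, hence some $b_{kp}\in B_{x_k}$ with $a\succ b_{kp}$, and then applies the primal condition $a.2.p$ directly to produce $b_{hq}\in B_{x_h}$ with $b_{kp}\Delta b_{hq}$, so $a\succ b_{hq}$ and $\mbox{not}(B_{x_h}\succ a)$ — a contradiction (and symmetrically with $a.2.d$ for Condition~\ref{cond:UpperBoundSufCond}). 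You instead contradict the extremality of the bound: you propagate $B_{x^\ast}\succ a$ upward through Proposition~\ref{prop:PropertiesActionsSets}(iv) and Remark~\ref{rem:PropertiesRelationsSetAction}(iii), conclude that $x^\ast$ already satisfies the defining property of $s^u(a)$, and invoke minimality. Your route is arguably tidier: the paper's step ``there should exist $x_k>x_h$ such that $a\succ B_{x_k}$'' silently discards the alternative failure mode $\mbox{not}(B_{x_k}\succ a)$, which can only be excluded by the very monotonicity you make explicit; your version needs no such hidden step. One small imprecision: the downward closure $a\succ B_{x^\ast}\Rightarrow a\succ B_x$ for $x<x^\ast$ that you attribute to ``the dual counterpart of Proposition~\ref{prop:PropertiesActionsSets}(iv)'' is not literally among the enumerated points (point~(iii) concerns $\succsim$, not $\succ$), although it is exactly what the paper's displayed proof of point~(vi) establishes and what the paper itself invokes in Propositions~\ref{prop:SoftDominancePreferenceSet} and~\ref{prop:SoftDominancePreferenceAction}; you should either state and prove it as a separate claim or note that it follows by the same two-step dominance argument. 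With that caveat, both proofs consume the primal and dual conditions in precisely the way you predict.
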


    \begin{proof}
        By contradiction, suppose that there exist $x_h \in X$ and  $a \in A$ such that  $x_h < s^u(a)$ and $B_{x_h}\succ a$. By Definition \ref{def:UpperBound}, this would imply that there should exist $x_k >x_h$ such that $a \succ B_{x_k}$. Thus, there would exist $b_{kp} \in B_{x_k}$ such that $a \succ b_{kp}$. By Condition $a.2.p$ (primal), there should exist $b_{hq} \in B_{x_h}$ such that $b_{kp} \Delta b_{hq}$. Consequently, $a \succ b_{hq}$, which would imply not($B_{x_h} \succ a$), contradicting thus the hypothesis $B_{x_h}\succ a$. Therefore, Condition \ref{cond:LowerBoundSufCond} holds.
	
	    Analogously, by contradiction, suppose that there exists $x_h \in X$ and  $a \in A$ such that $x_h > s^l(a)$ and $a \succ B_{x_h}$. By Definition \ref{def:LowerBound}, this would imply that there should exist $x_k < x_h$ such that $B_{x_k} \succ a$. Thus, there would exist $b_{kp} \in B_{x_k}$ such that $b_{kp} \succ a$. By Condition $a.2.d$ (dual), there should exist $b_{hq} \in B_{x_h}$ such that $b_{hq} \Delta b_{kp}$. Consequently, $b_{hq} \succ a$ which would imply not($a \succ B_{x_h}$), contradicting thus the hypothesis $a \succ B_{x_h}$. Therefore, Condition \ref{cond:UpperBoundSufCond} holds.
    \end{proof}

    \begin{proposition}[Implications of soft dominance in a preference of an action w.r.t. a set]\label{prop:SoftDominancePreferenceSet}
         If $B$ fulfills both the primal and the dual soft dominance separability conditions (see Points   $a.2.p$ and $a.2.d$ of  Condition \ref{cond:Separability}), then, for any $a \in A$, $a\succ B_x$ for all $x \leqslant s^l(a)$.
    \end{proposition}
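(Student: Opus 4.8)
The plan is to isolate a single monotonicity property of the relation $a\succ B_x$ in the score $x$, and then read the conclusion directly off Definition~\ref{def:LowerBound}. Concretely, I would first prove the \emph{downward monotonicity lemma}: if $B$ satisfies both soft dominance conditions, then for any two scores with $x_h<x_k$, $a\succ B_{x_k}$ implies $a\succ B_{x_h}$. This is the symmetric counterpart of points $iv)$ and $vi)$ of Proposition~\ref{prop:PropertiesActionsSets}, and I would establish it by exactly the technique used there. By Definition~\ref{def:RelationsActionsSets}$iii$, $a\succ B_{x_k}$ means that $a\succ b_{kp}$ for at least one $b_{kp}\in B_{x_k}$ while $\mbox{not}(b\succ a)$ for every $b\in B_{x_k}$. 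To transport the witnessing strict preference into the lower set, I apply the dual condition $a.2.d$ to $b_{kp}$, obtaining $b_{hr}\in B_{x_h}$ with $b_{kp}\Delta b_{hr}$, so that \eqref{eqn:Rem1_4} gives $a\succ b_{hr}$. To rule out any veto inside the lower set, I suppose $b_{hs}\succ a$ for some $b_{hs}\in B_{x_h}$; the primal condition $a.2.p$ then supplies $b_{kt}\in B_{x_k}$ with $b_{kt}\Delta b_{hs}$, and \eqref{eqn:Rem1_5} forces $b_{kt}\succ a$, contradicting $\mbox{not}(b\succ a)$ on $B_{x_k}$. Hence $a\succ B_{x_h}$, which is the lemma.

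With the lemma available, the proposition follows quickly. By Definition~\ref{def:LowerBound} the lower bound $s^l(a)$ is the highest reference score at which the strict preference $a\succ B_{s^l(a)}$ is present (whence also $\mbox{not}(B_{s^l(a)}\succ a)$, consistent with Remark~\ref{rem:OnDefinitions}). I would then take an arbitrary score $x\leqslant s^l(a)$ and distinguish two cases: if $x=s^l(a)$, then $a\succ B_x$ holds by the defining property of the bound; if $x<s^l(a)$, then applying the monotonicity lemma to $a\succ B_{s^l(a)}$ yields $a\succ B_x$. In both cases $a\succ B_x$, which is exactly the claim. Equivalently, and perhaps more transparently, the lemma shows that the set $\{x:a\succ B_x\}$ is downward closed, and since its maximum is by definition $s^l(a)$, this set coincides with $\{x:x\leqslant s^l(a)\}$.

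I expect the monotonicity lemma to be the only substantive obstacle; once it is in place the statement is essentially a bookkeeping step. The delicate point inside the lemma is that the two separability conditions are genuinely both needed and must be used in \emph{opposite} directions — the dual condition $a.2.d$ to carry the strict preference $a\succ b_{kp}$ downward, and the primal condition $a.2.p$ to prevent any upward-transported veto $b\succ a$ — so the argument collapses if only one of them is assumed. A secondary subtlety worth flagging explicitly is the endpoint $x=s^l(a)$: because monotonicity only propagates a strict preference to \emph{strictly} smaller scores, the equality case must be secured through Definition~\ref{def:LowerBound} itself (the bound is attained with an active strict preference) rather than through the open clause ``for all $x<s^l(a)$'', and it is precisely the soft dominance hypothesis, via the lemma, that guarantees there are no ``holes'' in the region where $a\succ B_x$ holds.
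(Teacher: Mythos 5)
Your proposal is correct and follows essentially the same route as the paper: the paper's proof likewise reads $a\succ B_{s^l(a)}$ off Definition~\ref{def:LowerBound} and then invokes the downward monotonicity of $a\succ B_x$ in $x$, citing Proposition~\ref{prop:PropertiesActionsSets}, whose written proof establishes exactly your lemma by the same dual-condition-to-transport / primal-condition-to-block-a-veto argument. The only difference is that you state and prove that monotonicity step explicitly rather than citing it.
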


    \begin{proof}
        By Definition \ref{def:LowerBound}, we have $a \succ B_{s^l(a)}$, and, since  Points   $a.2.p$ and $a.2.d$ (primal and dual) of Condition \ref{cond:Separability} hold, by  Proposition \ref{prop:PropertiesActionsSets}, we obtain  $a \succ B_x$, for all  $x < s^l(a)$.
    \end{proof}

    \begin{proposition}[Implications of soft dominance in a preference of a set w.r.t. an action]\label{prop:SoftDominancePreferenceAction}
         If $B$ fulfills both the primal and the dual soft dominance separability conditions (see Points   $a.2.p$ and $a.2.d$ of condition \ref{cond:Separability}), then, for any $a \in A$, $B_x\succ a$ for all $x \geqslant s^u(a)$.
    \end{proposition}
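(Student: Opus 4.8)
The plan is to prove this as the exact dual of Proposition~\ref{prop:SoftDominancePreferenceSet}, swapping the roles of $a$ and $B_x$, the two directions of $\succ$, and the lower/upper bounds. Whereas that proposition propagates the preference $a\succ B_x$ \emph{downward} from $s^l(a)$ to all smaller scores, here I would anchor the preference $B_x\succ a$ at $s^u(a)$ and propagate it \emph{upward} to all larger scores. The workhorse is point $iv)$ of Proposition~\ref{prop:PropertiesActionsSets}, which says precisely that, when $B$ satisfies both the primal ($a.2.p$) and the dual ($a.2.d$) soft dominance conditions, $B_{x_k}\succ a$ implies $B_{x_h}\succ a$ for every $x_h > x_k$. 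Since the present hypothesis is exactly ``both primal and dual soft dominance,'' that point is available.

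Concretely, I would carry out three steps. First, extract the anchor relation at the bound itself: by Definition~\ref{def:UpperBound}, $s^u(a)$ is the lowest score at which $B_x\succ a$ (together with $\mbox{not}(a\succ B_x)$) holds, which gives $B_{s^u(a)}\succ a$. Second, apply point $iv)$ of Proposition~\ref{prop:PropertiesActionsSets} with $x_k = s^u(a)$: because both soft dominance conditions of Condition~\ref{cond:Separability} hold, $B_{s^u(a)}\succ a$ forces $B_{x_h}\succ a$ for all $x_h > s^u(a)$. Third, combine the anchor with the propagated relations to obtain $B_x\succ a$ for every $x\geqslant s^u(a)$, which is the claim.

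The step I expect to be delicate is the first one, namely justifying $B_{s^u(a)}\succ a$ \emph{at} the bound, since Definition~\ref{def:UpperBound} is literally phrased over scores strictly above $s^u(a)$. I would resolve this by reading $s^u(a)$ as the actual score in $X$ at which $B_x\succ a$ first switches on (the infimum being attained because the score set $X=\{x_1,\ldots,x_\ell\}$ is finite), so that the strict-inequality phrasing of the definition, together with this attained anchor, is exactly what upgrades the conclusion from $x>s^u(a)$ to $x\geqslant s^u(a)$. The genuine mathematical content, that no score above $s^u(a)$ can fail $B_x\succ a$, is carried entirely by the upward monotonicity in point $iv)$, and it is there that both the primal and dual soft dominance hypotheses are indispensable: dropping either one would break the propagation and leave the conclusion holding only near the bound rather than uniformly above it.
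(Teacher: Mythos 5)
Your proposal is correct and follows essentially the same route as the paper: anchor $B_{s^u(a)}\succ a$ via Definition~\ref{def:UpperBound}, then propagate upward to all $x>s^u(a)$ using point $iv)$ of Proposition~\ref{prop:PropertiesActionsSets}. Your extra remark on why the anchor holds at the bound itself (the finite score set $X$ attaining the infimum) is a point the paper's one-line proof leaves implicit, but it is the same argument.
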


    \begin{proof}
		 By Definition \ref{def:UpperBound}, we have $B_{s^u(a)}\succ a$ and, since  Points   $a.2.p$ and $a.2.d$ (primal
        and dual) of Condition \ref{cond:Separability} hold, by  Proposition \ref{prop:PropertiesActionsSets}, we obtain  $B_x  \succ a$, for all  $x > s^u(a)$.
    \end{proof}

    \begin{proposition}[Implications of soft dominance in the indifference/incomparability region]\label{prop:SoftDominancePreferenceAction_}
         If $B$ fulfills both the primal and the dual soft dominance separability conditions (see Condition \ref{cond:Separability}, points   $a.2.p$ and $a.2.d$), then Conditions \ref{cond:NonPreferenceCond} is fulfilled. Moreover, in such a case, the value $s^u(a)$ is simply the highest value, such that $a\succ B_x$, for $x < s^u(a)$, and $s^l(a)$ is the lowest value, such that $B_x \succ a$, for $x > s^l(a)$, for all $a \in A$ (i.e., the second parts of Definitions \ref{def:LowerBound} and \ref{def:UpperBound} can be neglected since they are automatically fulfilled by themselves)
    \end{proposition}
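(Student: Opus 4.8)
The plan is to separate the statement into its two assertions---that Condition \ref{cond:NonPreferenceCond} holds, and that the second clauses of Definitions \ref{def:LowerBound} and \ref{def:UpperBound} are redundant---and to treat each using the two monotonicity results already in hand. Propositions \ref{prop:SoftDominancePreferenceSet} and \ref{prop:SoftDominancePreferenceAction} together fix the ``region structure'' of the comparisons of $a$ against the reference sets: under both soft dominance conditions we have $a\succ B_x$ for every $x\leqslant s^l(a)$ and $B_x\succ a$ for every $x\geqslant s^u(a)$. Everything else is read off from this.

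First I would prove Condition \ref{cond:NonPreferenceCond}. The key preliminary is that $a\succ B_x$, $B_x\succ a$, $a\sim B_x$, and $a\parallel B_x$ are mutually exclusive. From the parenthetical notes in Definition \ref{def:RelationsActionsSets}, if $a\sim B_x$ then no $b_{kq}\in B_x$ satisfies $a\succ b_{kq}$ or $b_{kq}\succ a$, so neither $a\succ B_x$ nor $B_x\succ a$ can hold; and if $a\parallel B_x$ then neither $a\succsim B_x$ nor $B_x\succsim a$ holds, whence, via the implications $\succ\Rightarrow\succsim$ of Remark \ref{rem:PropertiesRelationsSetAction}, again neither $a\succ B_x$ nor $B_x\succ a$ holds. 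Thus the hypothesis $a\sim B_x$ or $a\parallel B_x$ yields both $\mbox{not}(a\succ B_x)$ and $\mbox{not}(B_x\succ a)$. I would then argue by contradiction on each side: were $x\leqslant s^l(a)$, Proposition \ref{prop:SoftDominancePreferenceSet} would force $a\succ B_x$, so in fact $x>s^l(a)$; symmetrically, were $x\geqslant s^u(a)$, Proposition \ref{prop:SoftDominancePreferenceAction} would force $B_x\succ a$, so in fact $x<s^u(a)$. Together these give $s^l(a)<x<s^u(a)$, which is exactly Condition \ref{cond:NonPreferenceCond}.

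Next I would establish the ``moreover'' part. For the lower bound, the extra requirement $\mbox{not}(B_x\succ a)$ in Definition \ref{def:LowerBound} is already entailed by $a\succ B_x$: this is Remark \ref{rem:PropertiesRelationsSetAction} ii), $a\succ B_x\Rightarrow\mbox{not}(B_x\succsim a)\Rightarrow\mbox{not}(B_x\succ a)$. Hence the set of thresholds $v$ for which $a\succ B_x$ holds for all $x<v$ coincides with the set for which both clauses hold, so the highest admissible value is the same and $s^l(a)$ is unchanged when the second clause is dropped. The dual argument for $s^u(a)$ uses Remark \ref{rem:PropertiesRelationsSetAction} iii), $B_x\succ a\Rightarrow\mbox{not}(a\succ B_x)$, so the clause $\mbox{not}(a\succ B_x)$ may likewise be omitted from Definition \ref{def:UpperBound}.

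The two contradiction arguments and the appeal to the Remark are routine; the step deserving the most care is making the mutual exclusivity of $\sim$ and $\parallel$ (against a set) with $\succ$ fully explicit, since it is precisely what lets one read off both $\mbox{not}(a\succ B_x)$ and $\mbox{not}(B_x\succ a)$ from the hypothesis of Condition \ref{cond:NonPreferenceCond}. Once that is secured, the monotonicity propositions do all the remaining work, and no further direct use of the separability conditions is required beyond what is already packaged into Propositions \ref{prop:SoftDominancePreferenceSet} and \ref{prop:SoftDominancePreferenceAction}.
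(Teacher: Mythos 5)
Your proof is correct and follows essentially the same route as the paper's: both reduce Condition \ref{cond:NonPreferenceCond} to showing that $x\leqslant s^l(a)$ forces $a\succ B_x$ and $x\geqslant s^u(a)$ forces $B_x\succ a$ via Propositions \ref{prop:SoftDominancePreferenceSet} and \ref{prop:SoftDominancePreferenceAction}, and then conclude that the second clauses of Definitions \ref{def:LowerBound} and \ref{def:UpperBound} are redundant. You merely make explicit two steps the paper leaves implicit (the mutual exclusivity of $\sim$/$\parallel$ with $\succ$ against a set, and the appeal to Remark \ref{rem:PropertiesRelationsSetAction} for the redundancy), which is a welcome clarification rather than a different argument.
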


    \begin{proof}
    We have already noted that Conditions \ref{cond:LowerBoundSufCond} and \ref{cond:UpperBoundSufCond} are equivalent to Condition \ref{cond:UpperBoundCond}, in which, for all $a \in A$ and, for all values $x$ such that $s^l(a)<x<s^u(a)$, not($B_x \succ a$) and not($a \succ B_x$). Therefore, we have to prove only that if Points  $a.2.p$ and $a.2.d$ of Condition \ref{cond:Separability} are satisfied, then for all values $x$ such that not($B_x \succ a$) and not($a \succ B_x$), we have $s^l(a)<x<s^u(a)$. This is true because
		\begin{itemize}[label={--}]
			\item by Proposition \ref{prop:SoftDominancePreferenceSet}, if $x \leqslant s^l(a)$, then $a\succ B_x$; and,
			\item by Proposition \ref{prop:SoftDominancePreferenceAction}, if $x \geqslant s^u(a)$, then $B_x \succ a$.
		\end{itemize}
        Consequently, we obtain that, under the same conditions, $s^l(a)$ is simply the highest value $x$ such that $a \succ B_x$, and, $s^u(a)$ is simply the lowest value $x$ such that $B_x \succ a$.
    \end{proof}

\section{Theoretical results}\label{sec:TheoreticalResults}
\noindent This  section has the purpose of showing the consistency of the method with respect to some fundamental and quite natural requirements, namely the ones of uniqueness, independence, monotonicity, conformity, homogeneity, and stability.

    \begin{definition}[Inserting and deleting operations]\label{def:InsertionDeletingOperations} The following two operations are considered:
        \begin{enumerate}[label=\roman*)]
            \item Inserting operation:
                 \begin{enumerate}[label=i.\alph*)]
                    \item a new set $B_x$ is inserted in $B$;
                    \item a new action $b$ is inserted in $B_x$.
                 \end{enumerate}
            \item Deleting operation.
                 \begin{enumerate}[label=ii.\alph*)]
                    \item a set $B_x$ is removed from $B$;
                    \item an action $b$ is removed from $B_x$.
                 \end{enumerate}
        \end{enumerate}
    \end{definition}

    \begin{definition}[Structural requirements]\label{def:StructuralRequirements} The following structural requirements of the method are considered:
        \begin{enumerate}[label=\roman*)]
            \item \emph{Uniqueness}. For each action $a\in A$ there is a single value $s^l(a)$ and a single value $s^u(a)$.
            \item \emph{Independence}. The definition of the values $s^l(a)$ and $s^u(a)$, for action $a \in A$, does not depend on the other actions in $A\setminus\{a\}$.
            \item \emph{Monotonicity}. If $a \Delta a^\prime$, then $s^l(a) \geqslant s^l(a^\prime)$ and $s^u(a) \geqslant s^u(a^\prime)$.
            \item \emph{Conformity}. If $a \in B_{x_k}$, then $s^l(a) = x_{k-1}$ and $s^u(a) = x_{k+1}$.
            \item \emph{Homogeneity}. If two actions, $a$ and $a^\prime$, compare the same way with respect to the reference actions in $B$, then $s^l(a) = s^l(a^\prime)$ and $s^u(a) = s^u(a^\prime)$.
            \item \emph{Stability}. For $a \in A$ with $x_r=s^l(a)<s^u(a)=x_s$, let $s^{\ast l}(a)$ and $s^{\ast u}(a)$ be the new bounds  in consequence of one of the operations of Definition  \ref{def:InsertionDeletingOperations}. Then
						$$x_{r-1} \leqslant s^{\ast l}(a) \leqslant x_{r+1}$$
						and
						$$x_{s-1} \leqslant s^{\ast u}(a) \leqslant x_{s+1}.$$
        \end{enumerate}
    \end{definition}
The latter requirement means that single inserting or deleting operations according to Definition \ref{def:InsertionDeletingOperations} imply a minimal perturbation on the score range of each action.

		Note that uniqueness, independence, monotonicity and homogeneity are clearly satisfied. We only focus our attention on conformity and stability.

    \begin{theorem}[Conformity]\label{theo:Conformity}
         If $B$ fulfills both the primal and the dual soft dominance and preference separability conditions (see Points $a.2.p$, $a.2.d$, $b.2.p$, and $b.2.d$ of Condition \ref{cond:Separability},), then the conformity property of Definition
         \ref{def:StructuralRequirements} is fulfilled.
    \end{theorem}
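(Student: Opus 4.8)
The plan is to reduce the statement to two sharp maximality/minimality claims and then verify each by exhibiting the appropriate strict comparison while blocking comparisons in the opposite direction. Since $B$ satisfies both soft dominance conditions, Proposition \ref{prop:SoftDominancePreferenceAction_} applies, so I may work with its simplified characterizations: $s^l(a)$ is the \emph{highest} score $x$ with $a \succ B_x$, and $s^u(a)$ is the \emph{lowest} score $x$ with $B_x \succ a$. Writing $a = b_{kp} \in B_{x_k}$, it then suffices to establish
\[
a \succ B_{x_{k-1}}, \qquad \text{not}(a \succ B_{x_h}) \text{ for all } x_h \geqslant x_k,
\]
\[
B_{x_{k+1}} \succ a, \qquad \text{not}(B_{x_h} \succ a) \text{ for all } x_h \leqslant x_k.
\]
The first line forces $s^l(a) = x_{k-1}$ and the second forces $s^u(a) = x_{k+1}$ (taking the usual boundary conventions when $k=1$ or $k=\ell$).

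For the lower bound I would obtain $a \succ B_{x_{k-1}}$ from the dual soft preference condition $b.2.d$ applied with upper set $B_{x_k}$ and lower set $B_{x_{k-1}}$: this delivers at least one $b_{(k-1)q}$ with $a = b_{kp} \succ b_{(k-1)q}$, supplying the ``at least one'' requirement in Definition \ref{def:RelationsActionsSets}(iii). The complementary requirement, that no $b_{(k-1)q}$ satisfies $b_{(k-1)q} \succ a$, is exactly Condition \ref{cond:BasicAssumptions}(ii) for the pair $x_k > x_{k-1}$. To show no larger score qualifies, note that for $x_h = x_k$ Condition \ref{cond:BasicAssumptions}(i) forbids any preference inside $B_{x_k}$, so $a \succ b_{kq}$ never holds and hence $\text{not}(a \succ B_{x_k})$; and for $x_h > x_k$, Proposition \ref{prop:PropertiesActionsSets}(vi) (primal soft dominance) gives $B_{x_h} \succ a$, whence $\text{not}(a \succ B_{x_h})$ by Remark \ref{rem:PropertiesRelationsSetAction}(iii). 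Thus $x_{k-1}$ is the highest score with $a \succ B_x$.

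The upper bound is entirely symmetric. I would derive $B_{x_{k+1}} \succ a$ from the primal soft preference condition $b.2.p$ (upper set $B_{x_{k+1}}$, lower set $B_{x_k}$), giving at least one $b_{(k+1)q} \succ a$, while the companion requirement $\text{not}(a \succ b_{(k+1)q})$ for every $b_{(k+1)q}$ is again Condition \ref{cond:BasicAssumptions}(ii), now for the pair $x_{k+1} > x_k$. Minimality follows because for $x_h = x_k$ Condition \ref{cond:BasicAssumptions}(i) rules out $b_{kq} \succ a$, hence $\text{not}(B_{x_k} \succ a)$, while for $x_h < x_k$ the dual soft dominance gives $a \succsim B_{x_h}$ via Proposition \ref{prop:PropertiesActionsSets}(v), and the parenthetical note of Definition \ref{def:RelationsActionsSets}(i) (no $b_{hq} \succ a$) yields $\text{not}(B_{x_h} \succ a)$. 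Hence $x_{k+1}$ is the lowest score with $B_x \succ a$, giving $s^u(a) = x_{k+1}$.

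The main point to get right is the interplay between dominance and preference: the dominance conditions only deliver the nonstrict relation $\succsim$ (through Remark \ref{rem:PropertiesBinaryRelations}), so the \emph{strict} comparisons $a \succ B_{x_{k-1}}$ and $B_{x_{k+1}} \succ a$ genuinely require the soft preference conditions $b.2.d$ and $b.2.p$ — this is exactly why the hypothesis invokes all four separability conditions rather than dominance alone. Everything else is bookkeeping, namely checking that the basic assumptions of Condition \ref{cond:BasicAssumptions} correctly block the wrong-direction preferences at both the coincident level $x_k$ and the neighbouring levels.
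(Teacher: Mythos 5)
Your proof is correct and follows essentially the same route as the paper's: obtain the strict comparisons $a \succ B_{x_{k-1}}$ and $B_{x_{k+1}} \succ a$ from the soft preference conditions, block preferences at level $x_k$ and beyond via the basic assumptions together with Proposition \ref{prop:PropertiesActionsSets}, and conclude with the max/min characterization from Proposition \ref{prop:SoftDominancePreferenceAction_}. The only noteworthy deviation is that you exclude the wrong-direction preferences $b_{(k-1)q} \succ a$ and $a \succ b_{(k+1)q}$ by citing Condition \ref{cond:BasicAssumptions}(ii) directly, whereas the paper re-derives this by contradiction from the soft dominance conditions and the absence of preference inside a class; both are legitimate, since Condition \ref{cond:BasicAssumptions} is assumed throughout (the paper's own proof invokes its point i)).
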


    \begin{proof}
     By Points $b.2.p$ and $b.2.d$ of Condition \ref{cond:Separability}, for each $a \in B_{x_k}$ there exists $b_{k-1p} \in B_{x_{k-1}}$, such that
	\begin{equation}
        a \succ b_{k-1p} \label{(i)}%
    \end{equation}
    and $b_{k+1q} \in B_{x_{k+1}}$, such that
	\begin{equation}
        b_{k+1q} \succ a \label{(ii)}.%
    \end{equation}

    By contradiction, let us suppose that there exists $b_{k-1r} \in B_{x_{k-1}}$, such that
	\begin{equation}
        b_{k-1r} \succ a. \label{(iii)}%
    \end{equation}
     By Point $a.2.p$ of Condition \ref{cond:Separability}, there would exist $b_{ks} \in B_{x_k}$ such that $b_{ks} \Delta b_{k-1r}$ that, together with (\ref{(iii)}), would imply $b_{ks} \succ a$, which is impossible because we cannot have a limiting profile of a class preferred to another limiting profile of the same class. Consequently, there cannot be any profile from $B_{x_{k-1}}$ preferred to $b_{ks}$ which, together with (\ref{(i)}), implies that
	\begin{equation}
        a \succ B_{x_{k-1}}. \label{(iv)}%
    \end{equation}
    Again, by contradiction, let us suppose that there exists $b_{k+1t} \in B_{x_{k+1}}$, such that
	\begin{equation}
        a \succ b_{k+1t}. \label{(v)}%
    \end{equation}
     By Point $b.2.d$ of Condition \ref{cond:Separability}, there would exist $b_{ku} \in B_{x_k}$ such that $b_{k+1t} \Delta b_{ku}$ that, together with (\ref{(v)}) would imply $a \succ b_{ku}$, which is impossible because we cannot have a limiting profile of a class preferred to another limiting profile of the same class. Consequently, $a$ cannot be preferred to any profile from $B_{x_{k+1}}$, which, together  with (\ref{(ii)}), implies that
	\begin{equation}
        B_{x_{k+1}} \succ a. \label{(vi)}%
    \end{equation}
    Based on the above considerations and taking any $a \in B_k$, we can say that
    \begin{itemize}[label={--}]
	   \item due to (\ref{prop:PropertiesActionsSets}) and (\ref{(iv)}) for any $x_h<x_k$,
            \begin{equation}
                a \succ B_{x_h}, \label{(vii)}%
            \end{equation}
        \item from Point $i)$ of Condition 1,
            \begin{equation}
                not(a \succ B_{x_k}) \mbox{ and } not(B_{x_k} \succ a), \label{(viii)}%
            \end{equation}
        \item due to (\ref{prop:PropertiesActionsSets}) and (\ref{(vi)}) for any $x_h>x_k$,
            \begin{equation}
                B_{x_h}  \succ a.  \label{(ix)}%
            \end{equation}
            Therefore,
        \begin{itemize}
	       \item[$\circ$] $k-1$ is the maximum value of $h$, for which $a \succ B_{x_h}$ and, consequently, by Proposition
                \ref{prop:SoftDominancePreferenceAction_},  $s^l(a)=x_{k-1}$,
            \item[$\circ$] $k+1$ is the minimum value of $h$, for which, $B_{x_h} \succ a$ and, consequently, by Proposition \ref{prop:SoftDominancePreferenceAction_},  $s^u(a)=x_{k+1}$.
        \end{itemize}
    \end{itemize}
    \end{proof}

    \begin{lemma}[Inserting a reference set: consequences w.r.t. lower bound]\label{theo:InsertingSet_}
         If $B$ fulfills both the primal and the dual soft dominance separability conditions (see Points $a.2.p$ and $a.2.d$ of Condition \ref{cond:Separability}) and if a set of limiting profiles $B_x$ is inserted in $B$ in such a way that these conditions are still fulfilled, and $s^l(a)=x_s$ we have either
				$$x_s <s^{l \ast}(a) =x < x_{s+1}$$
				or
				$$s^{l \ast}(a)=s^l(a).$$
        Moreover, $s^{l \ast}(a) =x$ if and only if				
				\begin{itemize}[label={--}]
					\item $x_s < x <x_{s+1}$; and.
					\item $a \succ B_x$.
				\end{itemize}
    \end{lemma}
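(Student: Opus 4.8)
The plan is to reduce everything to the simplified characterization of the lower bound provided by Proposition \ref{prop:SoftDominancePreferenceAction_}. Since the insertion of $B_x$ is assumed to preserve both the primal and the dual soft dominance separability conditions, that proposition applies verbatim to the enlarged family $B\cup\{B_x\}$ as well. Hence, writing $X'=X\cup\{x\}$, the new lower bound is simply
$$s^{l\ast}(a)=\max\{y\in X' : a\succ B_y\},$$
whereas before the insertion $s^l(a)=x_s=\max\{y\in X : a\succ B_y\}$. The whole argument then reduces to comparing these two maxima, the only new competitor being the freshly inserted score $x$, which is distinct from every element of $X$; in particular $x\neq x_s$.

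First I would split according to whether $a\succ B_x$ holds. If $\mbox{not}(a\succ B_x)$, then $x$ does not belong to the set over which the maximum is taken, so $s^{l\ast}(a)=x_s=s^l(a)$. If $a\succ B_x$ but $x<x_s$, then $x$ does not exceed the previous maximum, and again $s^{l\ast}(a)=x_s=s^l(a)$. In both situations the second alternative of the statement, $s^{l\ast}(a)=s^l(a)$, is realized. The remaining case is $a\succ B_x$ together with $x>x_s$, which forces $s^{l\ast}(a)=x$.

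The key step is to show that in this last case one necessarily has $x<x_{s+1}$, so that indeed $x_s<s^{l\ast}(a)=x<x_{s+1}$. I would argue by contradiction: if $x>x_{s+1}$, then, since $B\cup\{B_x\}$ satisfies both soft dominance conditions, the downward propagation of strict preference (Proposition \ref{prop:PropertiesActionsSets}, exactly as it is invoked in the proof of Proposition \ref{prop:SoftDominancePreferenceSet}) applied to $x_{s+1}<x$ yields $a\succ B_{x_{s+1}}$. This contradicts $s^l(a)=x_s$, i.e. the fact that $x_s$ is the largest element of $X$ with $a\succ B_{x_s}$, because $x_{s+1}\in X$ with $x_{s+1}>x_s$ would then also satisfy $a\succ B_{x_{s+1}}$. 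Hence $x<x_{s+1}$, which establishes the dichotomy asserted in the first part of the lemma. I expect this monotonicity/maximality contradiction to be the main obstacle, since it is the only place where the soft dominance hypotheses are genuinely used.

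Finally, the equivalence is read off from the same case analysis. The implication from right to left is immediate: if $x_s<x<x_{s+1}$ and $a\succ B_x$, then $x$ exceeds the old maximum $x_s$ and lies in $\{y\in X' : a\succ B_y\}$, so $s^{l\ast}(a)=x$. Conversely, if $s^{l\ast}(a)=x$, then $x$ attains the maximum, whence $a\succ B_x$, and $x=s^{l\ast}(a)>x_s$ (otherwise the maximum would still be $x_s\neq x$); the bound $x<x_{s+1}$ then follows from the contradiction argument just described. This settles both directions and completes the proof.
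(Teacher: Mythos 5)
Your proof is correct and follows essentially the same route as the paper's: both reduce the claim, via Proposition \ref{prop:SoftDominancePreferenceAction_}, to identifying the largest score $z$ in $X\cup\{x\}$ with $a\succ B_z$, run the same case analysis on the position of $x$ and on whether $a\succ B_x$, and exclude the case $x>x_{s+1}$ with $a\succ B_x$ by the same downward-propagation argument from Proposition \ref{prop:PropertiesActionsSets} contradicting the maximality of $x_s$. The only difference is cosmetic (you branch first on $a\succ B_x$, the paper first on the location of $x$), and you state the biconditional part slightly more explicitly than the paper does.
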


    \begin{proof}
        If a set $B_x$ is added we can have the following cases:
	   \begin{enumerate}
		  \item $x < s^l(a)$: in this case, by Proposition \ref{prop:SoftDominancePreferenceSet}, we obtain $a \succ B_x$.
            Consequently, the highest value $z$ from $X \cup \{x\}$ such that $a \succ B_z$ is again $s^l(a)$, which, by Proposition \ref{prop:SoftDominancePreferenceAction_}, implies $s^{l \ast}(a)=s^l(a)$;
		\item $s^l(a)=x_s < x < x_{s+1}$ and not($a \succ B_x$): also in this case the maximal value $z$ from $X \cup
           \{x\}$ such that $a \succ B_z$ is  $s^l(a)$, which, again by Proposition \ref{prop:SoftDominancePreferenceAction_}, implies $s^{l \ast}(a)=s^l(a)$;
		\item $s^l(a)=x_s < x < x_{s+1}$ and $a \succ B_x$: in this case the highest value $z$ from $X \cup \{x\}$, such
           that, $a \succ B_z$ is  $x$, which, again by Proposition \ref{prop:SoftDominancePreferenceAction_}, implies $s^{l \ast}(a)=x$;
		\item $s^l(a)=x_s < x_{s+1} < x$: in this case we cannot have $a \succ B_x$ because from Condition
            \ref{cond:NonPreferenceActionCond} (that is, by  the joint consideration of Conditions \ref{cond:LowerBoundNecCond} and \ref{cond:LowerBoundSufCond}) not($a \succ B_{x_{s+1}})$, but, by Proposition \ref{prop:PropertiesActionsSets},   $a \succ B_x$ would imply $a \succ B_{x_{s+1}}$.
	\end{enumerate}
    \end{proof}

      \begin{lemma}[Inserting a reference set: consequences for the upper bound]\label{theo:InsertingSet}
         If $B$ fulfills both the primal and the dual soft dominance separability conditions (see Points $a.2.p$ and $a.2.d$ of Condition \ref{cond:Separability}), and if a set of limiting profiles $B_x$ is inserted in $B$ in such a way that these conditions are still fulfilled, and $s^u(a)=x_t$ we have either
				$$x_{t-1} <s^{u \ast}(a) =x < x_{t}$$
				or
				$$s^{u \ast}(a)=s^u(a).$$
                Moreover, $s^{u \ast}(a) =x$ if and only if				
				\begin{itemize}[label={--}]
					\item $x_{t-1} < x <x_{t}$; and,
					\item $B_x \succ a$.
				\end{itemize}
    \end{lemma}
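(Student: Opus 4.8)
The plan is to mirror, in dual form, the argument just used for the lower bound in Lemma~\ref{theo:InsertingSet_}, relying throughout on the characterization of the upper bound furnished by Proposition~\ref{prop:SoftDominancePreferenceAction_}: under the primal and dual soft dominance conditions, $s^u(a)$ is simply the lowest value $z$ in the current score set for which $B_z \succ a$. The crucial observation that makes the reduction work is that inserting a new set $B_x$ into $B$ does not alter any of the relations holding between $a$ and the sets already present, since each relation $B_z \succ a$ depends only on the comparisons of $a$ with the profiles of $B_z$ (Definition~\ref{def:RelationsActionsSets}). Hence computing $s^{u\ast}(a)$ amounts to locating the lowest $z \in X \cup \{x\}$ with $B_z \succ a$, and the whole proof is a case analysis on where the inserted value $x$ falls relative to $x_{t-1}$ and $x_t = s^u(a)$.

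First I would dispose of the insertions that cannot lower the upper bound. If $x > s^u(a) = x_t$, then $B_{x_t} \succ a$ still holds and no value below $x_t$ has been introduced, so $x_t$ remains the lowest qualifying value and $s^{u\ast}(a) = s^u(a)$. If instead $x < x_{t-1}$, I would argue that $B_x \succ a$ is impossible: were it to hold, Proposition~\ref{prop:PropertiesActionsSets}(iv) (monotonicity of $B_x \succ a$ in $x$) would propagate it to $B_{x_{t-1}} \succ a$, contradicting that $x_{t-1} < s^u(a)$ forces $\mbox{not}(B_{x_{t-1}} \succ a)$ by the upper bound general condition (Condition~\ref{cond:NonPreferenceSetCond}). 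Thus $B_x$ adds no qualifying value, and again $s^{u\ast}(a) = s^u(a)$.

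The decisive range is $x_{t-1} < x < x_t$, where I would split on whether $B_x \succ a$. If $\mbox{not}(B_x \succ a)$, the insertion contributes no qualifying value strictly below $x_t$, so the lowest such value is unchanged and $s^{u\ast}(a) = s^u(a)$. If $B_x \succ a$, then $x$ itself is a qualifying value lying strictly between $x_{t-1}$ and $x_t$; since no original profile between these two qualifies, $x$ becomes the new lowest qualifying value, giving $x_{t-1} < s^{u\ast}(a) = x < x_t$. Collecting the four cases yields the stated dichotomy, and the \emph{moreover} equivalence is immediate because $s^{u\ast}(a) = x$ arises in exactly the single case $x_{t-1} < x < x_t$ together with $B_x \succ a$.

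The step I expect to be the main obstacle is the one underpinning the whole reduction: justifying cleanly that the insertion leaves the pre-existing relations and the value of $s^u(a)$ intact, which is the independence implicit in Definition~\ref{def:RelationsActionsSets}, and then, in the fourth case, invoking the correct direction of monotonicity. One must use Proposition~\ref{prop:PropertiesActionsSets}(iv) for the relation $B_x \succ a$ rather than its lower-bound counterpart, together with Condition~\ref{cond:NonPreferenceSetCond}, to rule out $B_x \succ a$ for $x < x_{t-1}$; mixing these up is the easiest way to break the argument.
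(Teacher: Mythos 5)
Your proof is correct and follows essentially the same route as the paper's: the identical four-way case analysis on the position of $x$ relative to $x_{t-1}$ and $x_t$, using Proposition~\ref{prop:SoftDominancePreferenceAction_} to characterize $s^u(a)$ as the lowest $z$ with $B_z\succ a$, and ruling out $B_x\succ a$ for $x<x_{t-1}$ via Proposition~\ref{prop:PropertiesActionsSets}(iv) together with Condition~\ref{cond:NonPreferenceSetCond}. The only cosmetic difference is in the case $x>s^u(a)$, where the paper additionally observes that $B_x\succ a$ holds there, while you correctly note this is not needed for the conclusion.
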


    \begin{proof}
    If a set $B_x$ is added we can have the following cases:
	   \begin{enumerate}
		  \item $x > s^u(a)$: in this case, by Proposition \ref{prop:SoftDominancePreferenceSet}, we obtain $B_x \succ a$.
            Consequently, the lowest value $z$ from $X \cup \{x\}$ such that $B_z \succ a$ is again $s^u(a)$, which, from Proposition \ref{prop:SoftDominancePreferenceAction_}, implies $s^{u \ast}(a)=s^u(a)$;
		\item $s^u(a)=x_t > x > x_{t-1}$ and not($B_x \succ a$): also in this case the lowest value $z$ from $X \cup
            \{x\}$ such that $B_z \succ a$ is  $s^u(a)$, which, again by Proposition \ref{prop:SoftDominancePreferenceAction_}, implies $s^{u \ast}(a)=s^u(a)$;
		\item $s^u(a)=x_t > x > x_{t-1}$ and $B_x \succ a$: in this case the lowest value $z$ from $X \cup \{x\}$, such
            that, $B_z \succ a$ is  $x$, which, again by Proposition \ref{prop:SoftDominancePreferenceAction_}, implies $s^{u \ast}(a)=x$;
		\item $s^u(a)=x_t > x_{t-1} > x$: in this case one cannot have $B_x \succ a$ because by Condition
            \ref{cond:NonPreferenceSetCond} (that is, by the joint consideration of Conditions \ref{cond:UpperBoundNecCond} and \ref{cond:UpperBoundSufCond}) not($B_{x_{t-1}} \succ a)$, but, by Proposition \ref{prop:PropertiesActionsSets},   $B_x \succ a$ would imply $B_{x_{t-1}} \succ a$.
	\end{enumerate}
    \end{proof}

  \begin{remark}[Deleting a reference set]\label{rem:DeletingSet}
         If  both the primal and the dual soft dominance separability conditions (see Points $a.2.p$ and $a.2.d$ of Condition \ref{cond:Separability}) hold for $B$ and a set of limiting profiles $B_x$ is deleted from $B$, clearly  the primal and the dual soft dominance separability conditions continue to be fulfilled.
    \end{remark}

    \begin{lemma}[Deleting a reference set: consequences w.r.t. lower bound]\label{theo:DeletingSet}
         If $B$ fulfills both the primal and the dual soft dominance separability conditions (see points $a.2.p$ and $a.2.d$ of Condition \ref{cond:Separability}), a set of limiting profiles $B_x$ is removed from $B$, and $s^l(a)=x_s$, then			
				\begin{itemize}[label={--}]
					\item $s^l(a)=x_s=x$ and $s^{l \ast}(a)=x_{s-1}$; or
					\item $s^l(a)=x_s\neq x$ and $s^{l \ast}(a)=s^l(a).$
				\end{itemize}
	\end{lemma}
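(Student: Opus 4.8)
The plan is to reduce the whole statement to the clean characterisation of the lower bound that is available under the soft dominance conditions, and then to argue purely about how the maximiser of a fixed condition behaves when one index is deleted. By Remark \ref{rem:DeletingSet}, removing $B_x$ preserves both the primal and the dual soft dominance separability conditions, so Proposition \ref{prop:SoftDominancePreferenceAction_} applies before \emph{and} after the deletion. This lets me replace the two–part Definition \ref{def:LowerBound} by the simpler descriptions $s^l(a)=\max\{z\in X : a\succ B_z\}$ in the original configuration and $s^{l\ast}(a)=\max\{z\in X\setminus\{x\} : a\succ B_z\}$ in the deleted one. Everything then follows from comparing these two maxima, noting that each relation $a\succ B_z$ depends only on $a$ and the single set $B_z$ (Definition \ref{def:RelationsActionsSets}), hence is unaffected by the removal of any \emph{other} reference set.

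First I would dispose of the case $x\neq x_s$. Here the maximising set $B_{x_s}$ is not the one being deleted, so it survives in $B$ and the relation $a\succ B_{x_s}$ is untouched. Since $x_s$ was the maximum of $\{z\in X : a\succ B_z\}$ over all of $X$, it remains the maximum over the smaller index set $X\setminus\{x\}$, which still contains $x_s$. Therefore $s^{l\ast}(a)=x_s=s^l(a)$, which is the second bullet.

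The substantive case is $x=x_s$, where the maximiser itself is removed. Every score $x_h>x_s$ satisfies $\mbox{not}(a\succ B_{x_h})$, for otherwise $x_s$ would not have been the maximum; hence no score above $x_s$ can become the new maximiser, and the search is confined below $x_s$. By Proposition \ref{prop:SoftDominancePreferenceSet}, $a\succ B_z$ holds for every $z\leqslant s^l(a)=x_s$, and in particular $a\succ B_{x_{s-1}}$. Since $x_{s-1}$ is, by the ordering of $X$, the largest score strictly below $x_s$ and it is not deleted, it is exactly the new maximiser, so Proposition \ref{prop:SoftDominancePreferenceAction_} in the deleted configuration yields $s^{l\ast}(a)=x_{s-1}$, which is the first bullet.

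I expect the only delicate point to be the book-keeping in this last case: one must be certain that after removing $x_s$ the highest surviving candidate is precisely $x_{s-1}$ and not some intermediate value. This is guaranteed by the linear ordering of $X$, under which no score lies strictly between $x_{s-1}$ and $x_s$, combined with the maximality argument that already excludes every score above $x_s$. (If $x_s$ is the smallest score of $X$, the statement is read with the same sentinel convention for $x_{s-1}$ as used in Theorem \ref{theo:Conformity}.)
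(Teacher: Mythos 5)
Your proposal is correct and follows essentially the same route as the paper: both reduce the claim, via Remark \ref{rem:DeletingSet} and Proposition \ref{prop:SoftDominancePreferenceAction_}, to comparing the maximiser of $\{z : a\succ B_z\}$ over $X$ and over $X\setminus\{x\}$, and both handle the case $x=x_s$ by invoking Proposition \ref{prop:SoftDominancePreferenceSet} to get $a\succ B_{x_{s-1}}$. Your write-up is somewhat more explicit about why the individual relations $a\succ B_z$ are unaffected by the deletion and why no score above $x_s$ can take over, but the substance is identical.
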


    \begin{proof}
        According to Remark \ref{rem:DeletingSet}, Points $a.2.p$ and $a.2.d$ of Condition \ref{cond:Separability} hold also after $B_x$ is removed from $B$. Consequently, from Proposition \ref{prop:SoftDominancePreferenceAction_}, $s^{l\ast}(a)$ continues to be the highest value $z \in X \setminus \{x\}$ such that $a \succ B_z$. We can have two cases
		  \begin{enumerate}
		      \item[~] $s^l(a)=x_s=x$: remember from Proposition \ref{prop:SoftDominancePreferenceSet}, that the highest
                value $z \in X \setminus \{x\}$ such that $a \succ B_z$ becomes $x_{s-1}$ and, consequently, $s^{l \ast}(a)=x_{s-1}$;
		      \item[~] $s^l(a)=x_s\neq x$: the highest value $z \in X \setminus \{x\}$ such that $a \succ B_z$ continues
                 to be $x_{s}$ and, consequently, $s^{l \ast}(a)=s^l(a)=x_s$.
	\end{enumerate}
    \end{proof}

 \begin{lemma}[Deleting a reference set: consequences w.r.t. upper bound]\label{theo:DeletingSet}
         If $B$ fulfills both the primal and the dual soft dominance separability conditions (see Points $a.2.p$ and $a.2.d$ of Condition \ref{cond:Separability}), a set of limiting profiles $B_x$ is removed from $B$, and $s^u(a)=x_t$, then
				\begin{itemize}[label={--}]
					\item $s^u(a)=x_t=x$ and $s^{u \ast}(a)=x_{t+1}$; or,
					\item $s^u(a)=x_t\neq x$ and $s^{u \ast}(a)=s^u(a)$.
				\end{itemize}
				    \end{lemma}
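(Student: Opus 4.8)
The final lemma concerns deleting a reference set $B_x$ from $B$ and its effect on the upper bound $s^u(a)$. It is the exact dual of the immediately preceding lemma (deletion and lower bound). Given the symmetry of the whole development — lower bounds pair with $a\succ B_x$, upper bounds with $B_x\succ a$; primal pairs with dual — the proof should mirror the lower-bound deletion proof step for step, with all relations and inequalities reversed.

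**The plan.** The plan is to argue exactly as in the preceding deletion lemma, exploiting Remark~\ref{rem:DeletingSet} and Proposition~\ref{prop:SoftDominancePreferenceAction_}. First I would invoke Remark~\ref{rem:DeletingSet} to note that deleting $B_x$ preserves both the primal and dual soft dominance separability conditions, so the reduced family $B\setminus\{B_x\}$ still satisfies the hypotheses of Proposition~\ref{prop:SoftDominancePreferenceAction_}. That proposition characterizes the upper bound cleanly: $s^{u\ast}(a)$ is simply the \emph{lowest} value $z \in X\setminus\{x\}$ such that $B_z \succ a$. Everything then reduces to recomputing this minimum over the reduced score set.

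**Case split.** I would split on whether the deleted score equals the current upper bound. If $s^u(a)=x_t\neq x$, then $x_t$ still belongs to $X\setminus\{x\}$ and, by Proposition~\ref{prop:SoftDominancePreferenceAction}, $B_{x_t}\succ a$ still holds; since no score below $x_t$ satisfied $B_z\succ a$ before the deletion and removing $x$ cannot create new such scores, the lowest qualifying value is unchanged, giving $s^{u\ast}(a)=s^u(a)=x_t$. If instead $s^u(a)=x_t=x$, then the qualifying score $x_t$ is removed; by Proposition~\ref{prop:SoftDominancePreferenceAction} we have $B_z\succ a$ for every $z>x_t$, in particular $B_{x_{t+1}}\succ a$, and by the definition of $x_t$ as the lowest such value (before deletion) no surviving score strictly below $x_{t+1}$ qualifies, so the new lowest qualifying value is exactly $x_{t+1}$, i.e. $s^{u\ast}(a)=x_{t+1}$.

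**Anticipated obstacle.** No genuine difficulty is expected; the work is entirely a dualization of the previous lemma. The only point demanding care is the claim that deletion creates no new qualifying scores below the old bound — this relies on monotonicity of the relation $B_z\succ a$ in $z$ (Proposition~\ref{prop:PropertiesActionsSets}, point $iv$), which guarantees that the set $\{z : B_z\succ a\}$ is upward-closed, so its minimum over the surviving scores can only stay put or jump up to the next score $x_{t+1}$, never down. Keeping the direction of all inequalities consistent with the upper-bound (rather than lower-bound) orientation is the one place where a transcription slip could occur.
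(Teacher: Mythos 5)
Your proof is correct and follows essentially the same route as the paper's: both invoke Remark~\ref{rem:DeletingSet} to preserve the separability conditions, use Proposition~\ref{prop:SoftDominancePreferenceAction_} to reduce the problem to recomputing the lowest $z \in X\setminus\{x\}$ with $B_z \succ a$, and split on whether $x_t = x$. Your extra observation that $\{z : B_z \succ a\}$ is upward-closed is a slight elaboration the paper leaves implicit, but the argument is the same.
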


    \begin{proof}
        Remember that from Remark \ref{rem:DeletingSet}, Points $a.2.p$ and $a.2.d$ of Condition \ref{cond:Separability} hold also after $B_x$ is removed from $B$, and, taking into account Proposition \ref{prop:SoftDominancePreferenceAction_}, $s^{u\ast}(a)$ continues to be the lowest value $z \in X \setminus \{x\}$ such that $B_z \succ a$. We can have two cases:
		\begin{enumerate}
		      \item $s^u(a)=x_t=x$: remember that from Proposition \ref{prop:SoftDominancePreferenceAction}, the lowest value $z
                  \in X \setminus \{x\}$ such that $B_z \succ a$ becomes $x_{t+1}$ and, consequently, $s^{u \ast}(a)=x_{t+1}$;
		      \item $s^u(a)=x_t\neq x$: the lowest value $z \in X \setminus \{x\}$ such that $B_z \succ a$ continues
                  to be $x_{t}$ and, consequently, $s^{u \ast}(a)=s^u(a)=x_t$.
	   \end{enumerate}
    \end{proof}

    \begin{lemma}[Inserting a reference action: consequences w.r.t. lower bound]\label{theo:InsertingSet}
         If $B$ fulfills both the primal and the dual soft dominance separability conditions (see Points $a.2.p$ and $a.2.d$ of Condition \ref{cond:Separability}), and if a limiting profile $b_{kp}$ is added to $B_{x_k}$ in such a way that these conditions are still fulfilled, and $s^l(a)=x_s$, then				
				\begin{itemize}[label={--}]
					\item $s^{l \ast}(a)=x_{s-1}$ if and only if  $b_{kp} \succ a$ and $s^l(a)=x_s=x_k$;
					\item $s^{l \ast}(a)=x_{s+1}$, if and only if $a \succ b_{kp}$, $x_k=x_{s+1}$ and there is no
                            $b_{s+1q} \in B_{x_{s+1}}$ such that $b_{s+1q} \succ a$;
					\item $s^{l \ast}(a)=s^{l}(a)$, otherwise.
				\end{itemize}
    \end{lemma}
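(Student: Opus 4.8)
The plan is to reduce everything to the characterisation of the lower bound supplied by Proposition~\ref{prop:SoftDominancePreferenceAction_}: since the primal and dual soft dominance conditions are assumed to survive the insertion, $s^{l\ast}(a)$ is simply the highest $z\in X$ for which $a\succ B_z$ holds, now evaluated with the enlarged set $B_{x_k}\cup\{b_{kp}\}$. Inserting a single profile into $B_{x_k}$ leaves $X$ unchanged and modifies exactly one of the relations $a\succ B_{x_h}$, namely $a\succ B_{x_k}$; all the others are evaluated against unchanged sets. So the whole argument reduces to deciding (a) how the truth value of $a\succ B_{x_k}$ is affected by adding $b_{kp}$, and (b) where $x_k$ sits relative to the old maximiser $x_s=s^l(a)$.

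For step (a) I would read off Definition~\ref{def:RelationsActionsSets}: $a\succ B_{x_k}$ holds iff no profile of $B_{x_k}$ is preferred to $a$ and $a$ is preferred to at least one profile. Hence adding $b_{kp}$ can turn $a\succ B_{x_k}$ from true to false only when $b_{kp}\succ a$ (the surviving witness keeps the existential clause satisfied, so the only way to break the relation is a new profile preferred to $a$); it can turn it from false to true only when the relation previously failed solely because $a$ was preferred to no profile — with no existing profile preferred to $a$ — and the new profile satisfies $a\succ b_{kp}$; in every remaining case $a\succ B_{x_k}$ is unchanged.

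Step (b) then combines this with two monotonicity facts. By Proposition~\ref{prop:SoftDominancePreferenceSet}, $a\succ B_{x_h}$ holds for every $x_h\leqslant x_s$, and these relations are untouched whenever $x_h\neq x_k$; and by Proposition~\ref{prop:PropertiesActionsSets} (as applied to $\succ$ in the proof of Theorem~\ref{theo:Conformity}), $a\succ B_{x_k}$ implies $a\succ B_{x_h}$ for all $x_h<x_k$. I would now distinguish the three outcomes. If $x_k=x_s$ and $b_{kp}\succ a$, then $a\succ B_{x_s}$ is destroyed while $a\succ B_{x_{s-1}}$ persists, so the maximiser drops to $x_{s-1}$; conversely any drop forces $a\succ B_{x_s}$ to fail, which by step (a) can only happen when $x_k=x_s$ and $b_{kp}\succ a$, and the persistence of $a\succ B_{x_{s-1}}$ shows it cannot fall further — the first bullet. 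If $x_k=x_{s+1}$, $a\succ b_{kp}$, and no existing $b_{s+1q}$ is preferred to $a$, then $a\succ B_{x_{s+1}}$ is newly created while $a\succ B_{z}$ for $z>x_{s+1}$ stays false, so the maximiser rises to $x_{s+1}$; conversely any rise forces some $a\succ B_{x_k}$ with $x_k>x_s$ to become newly true, and the $\succ$-monotonicity rules out $x_k\geqslant x_{s+2}$ (it would force the unchanged $a\succ B_{x_{s+1}}$, contradicting $x_{s+1}>s^l(a)$), pinning $x_k=x_{s+1}$ and, via step (a), the remaining two conditions — the second bullet. All other configurations leave $a\succ B_{x_s}$ true and create no relation above $x_s$, giving $s^{l\ast}(a)=s^l(a)$, the third bullet.

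The main obstacle is the bidirectional (iff) character of the first two bullets, and in particular establishing that the bound can move by at most one index in either direction. The downward direction is easy, resting only on the fact that $B_{x_{s-1}}$ is left intact; the upward direction is the delicate one and leans essentially on the $\succ$-monotonicity in the scores to forbid a jump over the gap $x_{s+1}$, together with a careful use of Definition~\ref{def:RelationsActionsSets} to verify that the precise conjunction ``$a\succ b_{kp}$ and no existing profile preferred to $a$'' is exactly what switches $a\succ B_{x_{s+1}}$ on.
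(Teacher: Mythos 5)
Your proposal is correct and follows essentially the same route as the paper's proof: both reduce $s^{l\ast}(a)$ to the highest $z$ with $a\succ B_z$ via Proposition~\ref{prop:SoftDominancePreferenceAction_}, observe that only the relation $a\succ B_{x_k}$ can change, and then split on the position of $x_k$ relative to $x_s$, using the monotonicity from Proposition~\ref{prop:PropertiesActionsSets} to exclude a jump past $x_{s+1}$. Your explicit isolation of how the single relation $a\succ B_{x_k}$ can flip is just a cleaner packaging of the paper's exhaustive case enumeration.
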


    \begin{proof}
        If $b_{kp} \succ a$ and $s^l(a)=x_s=x_k$, after adding $b_{kp}$,we have not($a \succ B_{x_s}$). Consequently, taking into consideration Proposition \ref{prop:SoftDominancePreferenceSet}, the highest value $z$ such that $a \succ B_z$ becomes $x_{s-1}$, so that $s^{l \ast}(a)=x_{s-1}$.
	
	    If $a \succ b_{kp}$, $x_k=x_{s+1}$ and there is no other $b_{kq} \in B_{x_{s+1}}$ such that $b_{kq} \succ a$, after
        adding $b_{kp}$, we have $a \succ B_{x_{s+1}}$. In this case, the highest value $z$ such that $a \succ B_z$ is no more $x_s$ and it becomes $x_{s+1}$, in such a way that, according to Proposition \ref{prop:SoftDominancePreferenceAction_}, we obtain $s^{l \ast}(a)=x_{s+1}$.
	
	   All the other possible cases are the following:
	
	   \begin{enumerate}
		      \item $x_k < s^l(a)$: since $a \succ B_{s^l(a)}$ and after adding $b_{kp}$, Points $a.2.p$ and $a.2.d$ of
                    Condition \ref{cond:Separability} are still satisfied, then the highest $z$ such that $a \succ B_z$ remains $s^l(a)$, which, from Proposition \ref{prop:SoftDominancePreferenceAction_}, leads to $s^{l \ast}(a)=s^{l}(a)$;
		      \item  $x_k = s^l(a)$, but non$(b_{kp} \succ a)$: in this case, after adding the limiting profile $b_{kp}$,
                    we continue to have $a \succ B_{x_k}$, in such a way that the highest $z$ such that $a \succ B_z$ remains $s^l(a)$, which, from Proposition \ref{prop:SoftDominancePreferenceAction_}, leads to $s^{l \ast}(a)=s^{l}(a)$;
		      \item $x_k=x_{s+1}$, but not($a \succ b_{kp}$) or there is at least one $b_{kq} \in B_{x_{s+1}}$, such that,
                    $b_{kq} \succ a$: in this case, after adding the reference action $b_{kp}$, we continue to have not($a \succ B_{x_{s+1}}$), in such a way that the highest $z$, such that, $a \succ B_z$ remains $s^l(a)$, which, from Proposition \ref{prop:SoftDominancePreferenceAction_}, leads to $s^{l \ast}(a)=s^{l}(a)$;
		      \item $x_k>x_{s+1}$:  since not($a \succ B_{x_{s+1}}$) and since after adding $b_{kp}$, Points $a.2.p$ and
                    $a.2.d$ of Condition \ref{cond:Separability} are still satisfied, then, from Proposition \ref{prop:PropertiesActionsSets}, we obtain  not($a \succ B_{x_k}$), in such a way that the highets $z$ such that $a \succ B_z$ remains $s^l(a)$ which, by Proposition \ref{prop:SoftDominancePreferenceAction_}, leads to $s^{l \ast}(a)=s^{l}(a)$.
		\end{enumerate}
	\end{proof}

      \begin{lemma}[Inserting a reference action: consequences w.r.t. upper bound]\label{theo:InsertingSet}
         If $B$ fulfills both the primal and the dual soft dominance separability conditions (see Points $a.2.p$ and $a.2.d$ of Condition \ref{cond:Separability}), and if a limiting profile $b_{kp}$ is added to $B_{x_k}$ in such a way that these conditions are still fulfilled, and $s^u(a)=x_t$, then 				
				\begin{itemize}[label={--}]
					\item $s^{u \ast}(a)=x_{t+1}$, if $a \succ b_{kp}$ and $s^u(a)=x_t=x_k$;
					\item $s^{u \ast}(a)=x_{t-1}$, if $b_{kp} \succ a$, $x_k=x_{t-1}$ and there is
                    no $b_{kq} \in B_{x_{t-1}}$ such that $a \succ b_{kq} $;
					\item $s^{u \ast}(a)=s^u(a)$, otherwise.
				\end{itemize}
    \end{lemma}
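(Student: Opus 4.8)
The plan is to mirror the proof of the companion lower-bound lemma, using as the single organising tool Proposition~\ref{prop:SoftDominancePreferenceAction_}: under the primal and dual soft dominance conditions, $s^u(a)$ is exactly the lowest score $z$ for which $B_z \succ a$. Since the modified collection $B'$ (with $b_{kp}$ adjoined to $B_{x_k}$) is assumed to still satisfy both conditions, the same proposition applies verbatim to $B'$ and identifies $s^{u\ast}(a)$ as the lowest $z$ with $B'_z \succ a$. Hence the whole argument reduces to tracking how the set relation $B_z \succ a$ of Definition~\ref{def:RelationsActionsSets}(iv) is affected by the insertion, noting that only $B_{x_k}$ changes while every other $B_z$ is untouched.

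First I would recall that $B_z \succ a$ holds precisely when (i) no profile $b \in B_z$ satisfies $a \succ b$, and (ii) at least one profile $b \in B_z$ satisfies $b \succ a$. The effect of adjoining $b_{kp}$ is then read off directly: if $a \succ b_{kp}$, the insertion can only violate~(i), so it can destroy $B_{x_k}\succ a$ but never create it; if $b_{kp}\succ a$, it can only help~(ii), so it can create $B_{x_k}\succ a$ (provided (i) still holds) but never destroy it; and if $a \sim b_{kp}$ or $a \parallel b_{kp}$, it touches neither (i) nor (ii), so no set relation changes and $s^{u\ast}(a)=s^u(a)$ immediately.

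I would then dispatch the two genuinely moving cases. If $a\succ b_{kp}$ and $x_k=x_t=s^u(a)$, then (i) fails for $B'_{x_t}$, so $B'_{x_t}\succ a$ no longer holds; since $B_{x_{t+1}}$ is unchanged and $B_{x_{t+1}}\succ a$ already held by Proposition~\ref{prop:SoftDominancePreferenceAction}, the new lowest witness is $x_{t+1}$, giving $s^{u\ast}(a)=x_{t+1}$. If instead $b_{kp}\succ a$, $x_k=x_{t-1}$, and no original profile of $B_{x_{t-1}}$ is beaten by $a$, then before insertion (i) held at $x_{t-1}$ while (ii) failed (which is exactly why $x_{t-1}<s^u(a)$); adjoining $b_{kp}$ supplies a witness for (ii) and preserves (i), so $B'_{x_{t-1}}\succ a$ now holds and the lowest witness drops to $x_{t-1}$, giving $s^{u\ast}(a)=x_{t-1}$.

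The hard part will be to confirm the ``otherwise'' verdict $s^{u\ast}(a)=s^u(a)$ by excluding spurious changes when $x_k$ is far from $x_t$. Two consistency checks, both resting on the upward monotonicity of point~iv) of Proposition~\ref{prop:PropertiesActionsSets} together with the dominance-to-preference implications \eqref{eqn:Rem1_4} and \eqref{eqn:Rem1_5} of Remark~\ref{rem:PropertiesBinaryRelations}, carry the weight. On the one hand, $a\succ b_{kp}$ with $x_k>x_t$ is impossible: by the dual condition $a.2.d$ the profile $b_{kp}$ dominates some $b'\in B_{x_t}$, whence $a\succ b'$ and $B_{x_t}\succ a$ would fail, contradicting $x_t=s^u(a)$. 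On the other hand, $b_{kp}\succ a$ at a level $x_k<x_{t-1}$ cannot create the relation: by primal $a.2.p$ some $b''\in B_{x_{t-1}}$ dominates $b_{kp}$, so $b''\succ a$ and (ii) holds at $x_{t-1}$; since $B_{x_{t-1}}\succ a$ nonetheless fails, some $b'''\in B_{x_{t-1}}$ must satisfy $a\succ b'''$, and by dual $a.2.d$ this $b'''$ dominates an (original) profile of $B_{x_k}$ that $a$ then beats, so (i) already fails at $x_k$ and stays failed. Together with the remaining routine subcases ($b_{kp}\succ a$ with $x_k\ge x_t$, $x_k=x_{t-1}$ with some original profile already beaten by $a$, and $a\succ b_{kp}$ with $x_k<x_t$), these show the lowest witness is unmoved, completing the characterisation.
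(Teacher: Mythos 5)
Your proof is correct and follows essentially the same route as the paper's: both reduce the problem, via Proposition~\ref{prop:SoftDominancePreferenceAction_}, to tracking the lowest score $z$ with $B_z \succ a$ after the insertion, and then run the same case analysis on the position of $x_k$ relative to $x_t$ (your ``destroy-only/create-only/neutral'' trichotomy on the relation between $a$ and $b_{kp}$ is just a repackaging of the paper's four residual cases, with the $x_k<x_{t-1}$ case handled by the same appeal to the monotonicity in point~iv) of Proposition~\ref{prop:PropertiesActionsSets}). The only substantive additions are your observation that $a\succ b_{kp}$ with $x_k>x_t$ is outright impossible and your correct citation of Proposition~\ref{prop:SoftDominancePreferenceAction} where the paper's text misreferences Proposition~\ref{prop:SoftDominancePreferenceSet}; neither changes the argument.
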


    \begin{proof}
        If $a \succ b_{kp}$ and $s^u(a)=x_t=x_k$, then, after adding $b_{kp}$, we have not($B_{x_k} \succ a$).  Consequently, taking into consideration Proposition \ref{prop:SoftDominancePreferenceSet}, the lowest value $z$ such that $B_z \succ a$ becomes $x_{t+1}$, in such a way that $s^{u \ast}(a)=x_{t+1}$.
	
	    If $b_{kp} \succ a$, $x_k=x_{t-1}$ and there is no other $b_{kq} \in B_{x_{t-1}}$ such that $a \succ b_{kq}$, after
        adding $b_{kp}$, we  have $B_{x_{t-1}} \succ a$. In this case, the lowest value $z$ such that $B_z \succ a$ is no more $x_t$ and it becomes $x_{t-1}$, in such a way that, according to Proposition \ref{prop:SoftDominancePreferenceAction_}, we obtain $s^{u \ast}(a)=x_{t-1}$.
	
	    All the other possible cases are the following:
	
	  \begin{enumerate}
		  \item $x_k > s^u(a)$: since $B_{s^u(a)} \succ a$ and after adding $b_{kp}$,  Points $a.2.p$ and $a.2.d$ of
               Condition \ref{cond:Separability} are still satisfied, the lowest $z$ such that $B_z \succ a$ remains $s^u(a)$, which, by Proposition \ref{prop:SoftDominancePreferenceAction_}, leads to
               $s^{u \ast}(a)=s^{u}(a)$;
		\item  $x_k = s^u(a)$, but not$(a \succ b_{kp})$: in this case, after adding the reference action $b_{kp}$, we
            continue to have $B_{x_k} \succ a$, in such a way that the lowest $z$ such that $B_z \succ a$ remains $s^u(a)$, which, by Proposition \ref{prop:SoftDominancePreferenceAction_}, leads to $s^{u \ast}(a)=s^{u}(a)$;
		\item $x_k=x_{t-1}$, but not($b_{kp} \succ a$) or there is at least one $b_{kq} \in B_{x_{t-1}}$ such that $a
            \succ b_{kq}$: in this case, after adding the reference action $b_{kp}$, we continue to have not($B_{x_{t-1}} \succ a$), in such a way that the lowest $z$ such that $B_z \succ a$ remains $s^u(a)$, which, by Proposition \ref{prop:SoftDominancePreferenceAction_}, leads to $s^{u \ast}(a)=s^{u}(a)$;
		\item $x_k<x_{t-1}$:  since not($B_{x_{t-1}} \succ a$) and since after adding $b_{kp}$ Points $a.2.p$ and $a.2.d$
            of Condition \ref{cond:Separability} are still satisfied, then, by Proposition \ref{prop:PropertiesActionsSets}, we obtain  not($B_{x_k} \succ a$), in such a way that the lowest $z$ such that $B_z \succ a$ remains $s^u(a)$, which, by Proposition \ref{prop:SoftDominancePreferenceAction_}, leads to $s^{u \ast}(a)=s^{u}(a)$.
	  \end{enumerate}
	\end{proof}

    \begin{lemma}[Deleting a reference action: consequences w.r.t. lower bound]\label{theo:DeletingSet}
        If $B$ fulfills both the primal and the dual soft dominance separability conditions (see Points $a.2.p$ and $a.2.d$ of Condition \ref{cond:Separability}), and if a limiting profile $b_{kp}$ is removed from $B_{x_k}$, in such a way that, these conditions are still fulfilled, and $s^l(a)=x_s$, then				
				\begin{itemize}[label={--}]
					\item $s^{l \ast}(a)=x_{s-1}$, if $s^l(a)=x_k$, $a \succ b_{kp}$ and there is no other $b_{kq} \in
                        B_{x_k}$ such that $a \succ b_{kq}$,
					\item $s^{l \ast}(a)=x_{s+1}$, if $b_{kp} \succ a$, $x_k=x_{s+1}$, there is no other $b_{kq} \in
                        B_{x_{s+1}}$ such that $b_{kq} \succ a$ and there exists at least one reference action $b_{kr} \in B_{x_{s+1}}$ such that $a \succ b_{kr}$;
					\item $s^{l \ast}(a)=s^{l}(a)$, otherwise.
				\end{itemize}
    \end{lemma}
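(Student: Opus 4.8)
The plan is to reduce everything to Proposition \ref{prop:SoftDominancePreferenceAction_}: since the primal and dual soft dominance conditions are assumed to survive the deletion, that proposition guarantees that $s^{l\ast}(a)$ is simply the highest score $z \in X$ for which $a \succ B_z$ holds after $b_{kp}$ has been removed. Because deleting a single profile $b_{kp}$ alters only the set $B_{x_k}$, the relations $a \succ B_z$ for $z \neq x_k$ are untouched; the whole argument therefore hinges on how the single relation $a \succ B_{x_k}$ can flip, and I would organize the proof according to the position of $x_k$ relative to $x_s = s^l(a)$. The second workhorse is the definition of $a \succ B_{x_k}$ (Definition \ref{def:RelationsActionsSets}, point iii), which holds exactly when (i) no profile of $B_{x_k}$ is strictly preferred to $a$ and (ii) at least one profile is strictly dominated, i.e. $a \succ b_{kq}$ for some $q$. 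The key observation is that removing a profile can only delete a witness of (ii) or delete a violation of (i); it can never create either.

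First I would treat $x_k = x_s$, which yields the first alternative. Here $a \succ B_{x_s}$ held before deletion. If $a \succ b_{kp}$ and $b_{kp}$ is the unique witness of condition (ii), then after removal no profile of $B_{x_s}$ remains strictly dominated, so $a \succ B_{x_s}$ fails; since deletion cannot introduce a new violation of (i), the highest surviving $z$ drops to $x_{s-1}$, which still satisfies $a \succ B_{x_{s-1}}$ by Proposition \ref{prop:SoftDominancePreferenceSet} (the set $B_{x_{s-1}}$ being unchanged), giving $s^{l\ast}(a) = x_{s-1}$. Conversely, if either $\mbox{not}(a \succ b_{kp})$ or another witness $b_{kq}$ with $a \succ b_{kq}$ survives, then $a \succ B_{x_s}$ persists and the bound is unchanged, feeding the ``otherwise'' clause.

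Next I would handle $x_k = x_{s+1}$, giving the second alternative. Before deletion $\mbox{not}(a \succ B_{x_{s+1}})$ since $x_s$ is maximal. The only way removal can newly establish $a \succ B_{x_{s+1}}$ is by eliminating the obstruction to (i): this requires $b_{kp} \succ a$, that no other profile of $B_{x_{s+1}}$ be strictly preferred to $a$, and that some $b_{kr} \in B_{x_{s+1}}$ still witness $a \succ b_{kr}$ for (ii) --- exactly the stated hypotheses. Under them $a \succ B_{x_{s+1}}$ now holds, while $a \succ B_{x_h}$ stays false for every $x_h > x_{s+1}$ (those sets being untouched), so $s^{l\ast}(a) = x_{s+1}$. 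If any of these conditions fails, $a \succ B_{x_{s+1}}$ remains false and the bound is again unchanged.

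The remaining positions all fall into the ``otherwise'' clause. If $x_k < x_s$, the unchanged relation $a \succ B_{x_s}$ still certifies the bound at $x_s$; and if $x_k > x_{s+1}$, I would argue by contradiction that $a \succ B_{x_k}$ cannot become true, since by the monotonicity implication of Proposition \ref{prop:PropertiesActionsSets} it would force $a \succ B_{x_{s+1}}$ on the untouched set $B_{x_{s+1}}$, contradicting $\mbox{not}(a \succ B_{x_{s+1}})$. The step deserving the most care --- and the main obstacle --- is precisely this last propagation argument together with the two boundary cases: one must verify that deleting a profile shifts the highest surviving $z$ by at most one reference level in either direction, which is exactly what the ``no other witness / no other violator'' hypotheses in the first two alternatives encode, and which the propagation argument secures for the far case $x_k > x_{s+1}$.
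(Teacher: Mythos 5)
Your proposal is correct and follows essentially the same route as the paper: it reduces the claim to the characterization of $s^{l\ast}(a)$ as the highest $z$ with $a \succ B_z$ via Proposition \ref{prop:SoftDominancePreferenceAction_}, splits on the position of $x_k$ relative to $x_s$, uses Proposition \ref{prop:SoftDominancePreferenceSet} for the drop to $x_{s-1}$, and uses the monotonicity implication of Proposition \ref{prop:PropertiesActionsSets} to rule out the case $x_k > x_{s+1}$, exactly as the paper's proof does. Your explicit remark that deleting a profile can only remove a witness of $a \succ b_{kq}$ or remove an obstruction $b_{kq} \succ a$, never create one, is a slightly cleaner packaging of the same case analysis.
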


    \begin{proof}
        If $s^l(a)=x_k$, $a \succ b_{kp}$ and there is no other $b_{kq} \in B_{x_k}$ such that $a \succ b_{kq}$, then after $b_{kp}$ is removed we have not$(a \succ B_{x_k})$. Consequently, taking into consideration Proposition \ref{prop:SoftDominancePreferenceSet}, the highest value $z$ such that $a \succ B_z$ becomes $x_{s-1}$, in such a way that $s^{l \ast}(a)=x_{s-1}$.
	
        If $b_{kp} \succ a$, $x_k=x_{s+1}$, there is no other $b_{kq} \in B_{x_{s+1}}$ such that $b_{kq} \succ a$ and there exists at least one limiting profile $b_{kr} \in B_{x_{s+1}}$ such that $a \succ b_{kr}$, after removing  $b_{kp}$ we have $a \succ B_{x_{s+1}}$, in such a way that the highest $z$ such that $a \succ B_z$ becomes $x_{s+1}$ and, consequently, from Proposition \ref{prop:SoftDominancePreferenceAction_}, $s^{l \ast}(a)=x_{s+1}$.
	
        All the other possible cases are the following:
	
	\begin{enumerate}
		\item $x_k < s^l(a)$: since after removing $b_{kp}$ the highest $z$ such that $a \succ B_z$ remains $s^l(a)$ and
            Points $a.2.p$ and $a.2.d$ of Condition \ref{cond:Separability} are still satisfied, by Proposition \ref{prop:SoftDominancePreferenceAction_}, we obtain $s^{l \ast}(a)=s^{l}(a)$;
		\item  $x_k = s^l(a)$, and not($a \succ b_{kp}$) or there is at least another $b_{kq} \in B_{x_k}$ such that $a \succ b_{kq}$: in this case, after removing the limiting profile $b_{kp}$, we continue to have $a \succ B_{s^l(a)}$, and, therefore, $s^{l \ast}(a)=s^{l}(a)$;\
		\item $x_k=x_{s+1}$, but not($b_{kp} \succ a$) or there is at least another $b_{kq} \in B_{x_s+1}$ such that $b_{kq} \succ a$ or there does not exist any reference action $b_{kr} \in B_{x_{s+1}}$ such that $a \succ b_{kr}$: in this case, after removing the limiting profile $b_{kp}$, we continue to have not($a \succ B_{x_{s+1}}$), in such a way that the highest $z$ such that $B_z \succ a$ remains $s^l(a)$, which gives $s^{l \ast}(a)=s^{l}(a)$;
		\item $x_k>x_{s+1}$:  since not($a \succ B_{x_{s+1}}$) and since after removing $b_{kp}$ Points $a.2.p$ and $a.2.d$
            of Condition \ref{cond:Separability} are still satisfied, then, by Proposition \ref{prop:PropertiesActionsSets} we obtain  not($a \succ B_{x_k}$), in such a way that the highest $z$ such that $a \succ B_z$ remains $s^l(a)$, which leads to $s^{l \ast}(a)=s^{l}(a)$.
	\end{enumerate}
	
    \end{proof}

    \begin{lemma}[Deleting a reference action: consequences w.r.t. upper bound]\label{theo:DeletingSet}
        If $B$ fulfills both the primal and the dual soft dominance separability conditions (see Points $a.2.p$ and $a.2.d$ of Condition \ref{cond:Separability}) and if a limiting profile $b_{kp}$ is removed from $B_{x_k}$ in such a way that these conditions are still fulfilled, and $s^u(a)=x_t$, then				
				\begin{itemize}[label={--}]
					\item $s^{u \ast}(a)=x_{t+1}$, if $s^u(a)=x_k$, $b_{kp} \succ a$ and there is no other $b_{kq} \in
                        B_{x_k}$ such that $b_{kq} \succ a$,
					\item $s^{u \ast}(a)=x_{t-1}$, if $a \succ b_{kp}$, $x_k=x_{t-1}$, there is no $b_{kq} \in B_{x_{t-1}}$
                        such that $a \succ b_{kq}$ and there is at least one reference action $b_{kr} \in B_{x_{t-1}}$ such that $b_{kr} \succ a$;
					\item $s^{l \ast}(a)=s^{u}(a)$, otherwise.
				\end{itemize}
    \end{lemma}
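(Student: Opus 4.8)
The plan is to mirror the proof of the preceding lemma on the lower bound, exchanging the roles of $a \succ B_x$ and $B_x \succ a$, of $s^l$ and $s^u$, and of the level shifts $x_{s\pm 1}$ and $x_{t\mp 1}$. First I would record the working characterization of the upper bound after the deletion. Since the lemma hypothesizes that the primal and dual soft dominance separability conditions remain fulfilled once $b_{kp}$ is removed, Proposition \ref{prop:SoftDominancePreferenceAction_} applies to the modified family, so $s^{u\ast}(a)$ is exactly the lowest value $z$ in the modified $X$ for which $B_z \succ a$. The whole argument then reduces to tracking, level by level, whether the relation $B_z \succ a$ holds after $b_{kp}$ is deleted; the monotone propagation of Proposition \ref{prop:PropertiesActionsSets} together with Proposition \ref{prop:SoftDominancePreferenceAction} guarantees that it suffices to check this flip at the single relevant level.

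For the two cases in which the bound actually moves, I would argue directly from the set--action relation of Definition \ref{def:RelationsActionsSets}. In the first case ($s^u(a)=x_k$, $b_{kp}\succ a$, and $b_{kp}$ the only profile of $B_{x_k}$ with this property), removing $b_{kp}$ leaves $B_{x_k}$ with no profile $b$ satisfying $b\succ a$, so $B_{x_k}\succ a$ fails; by Proposition \ref{prop:SoftDominancePreferenceAction} we still have $B_{x_{t+1}}\succ a$, hence the lowest level with $B_z\succ a$ climbs to $x_{t+1}$ and $s^{u\ast}(a)=x_{t+1}$. In the second case ($x_k=x_{t-1}$, $a\succ b_{kp}$, no remaining $b_{kq}\in B_{x_{t-1}}$ with $a\succ b_{kq}$, and some $b_{kr}\in B_{x_{t-1}}$ with $b_{kr}\succ a$), the original failure of $B_{x_{t-1}}\succ a$ was caused precisely by the presence of $b_{kp}$ with $a\succ b_{kp}$; after deletion no profile of $B_{x_{t-1}}$ is preferred by $a$ while $b_{kr}\succ a$ survives, so now $B_{x_{t-1}}\succ a$ holds and the lowest such level drops to $x_{t-1}$, giving $s^{u\ast}(a)=x_{t-1}$.

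For the residual ``otherwise'' situations I would split into the four sub-cases dual to the lower-bound lemma: (i) $x_k>s^u(a)$, where $B_{s^u(a)}\succ a$ is untouched; (ii) $x_k=s^u(a)$ but either $b_{kp}$ is not the unique profile with $b_{kp}\succ a$ or $a$ is not preferred over $b_{kp}$, so $B_{x_k}\succ a$ persists; (iii) $x_k=x_{t-1}$ but one of the three triggering conditions fails, so $\mbox{not}(B_{x_{t-1}}\succ a)$ is preserved; and (iv) $x_k<x_{t-1}$, where $\mbox{not}(B_{x_{t-1}}\succ a)$ propagates upward to $\mbox{not}(B_{x_k}\succ a)$ by Proposition \ref{prop:PropertiesActionsSets}. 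In each of these the lowest level with $B_z\succ a$ is unchanged, so $s^{u\ast}(a)=s^u(a)$ by Proposition \ref{prop:SoftDominancePreferenceAction_}. The main obstacle I anticipate is the bookkeeping in cases (ii) and (iii): I must verify via Definition \ref{def:RelationsActionsSets} that the defining ``at least one $b\succ a$ and no $a\succ b$'' status of the relevant set survives the deletion under each branch of the negated hypotheses, since it is precisely the loss or gain of this status that controls whether the bound moves.
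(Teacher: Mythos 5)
Your proposal follows essentially the same route as the paper's own proof: the same reduction via Proposition \ref{prop:SoftDominancePreferenceAction_} to tracking the lowest level $z$ with $B_z\succ a$, the same direct arguments for the two cases where the bound moves, and the same four residual sub-cases ($x_k>s^u(a)$, $x_k=s^u(a)$, $x_k=x_{t-1}$, $x_k<x_{t-1}$) handled with Propositions \ref{prop:PropertiesActionsSets} and \ref{prop:SoftDominancePreferenceAction}. The only quibble is a wording slip in your sub-case (ii), where the negated hypothesis should be ``$\mbox{not}(b_{kp}\succ a)$'' rather than ``$a$ is not preferred over $b_{kp}$''; this does not affect the argument since the ``otherwise'' cases are fixed by the lemma statement.
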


    \begin{proof}
        If $s^u(a)=x_k$, $b_{kp} \succ a$ and there is no other $b_{kq} \in B_{x_k}$ such that $b_{kq} \succ a$, then after $b_{kp}$ is removed one has non$(B_{x_k} \succ a)$. Consequently, taking into consideration Proposition \ref{prop:SoftDominancePreferenceSet}, the lowest value $z$ such that $B_z \succ a$ becomes $x_{t+1}$, in such a way that $s^{u \ast}(a)=x_{t+1}$.

        If $a \succ b_{kp}$, $x_k=x_{t-1}$, there is no other $b_{kq} \in B_{x_{t-1}}$ such that $a \succ b_{kq}$ and there is at least one limiting profile $b_{kr} \in B_{x_{t-1}}$ such that $b_{kr} \succ a$, after removing  $b_{kp}$ one has $B_{x_{t-1}} \succ a$, in such a way that the lowest $z$ such that $a \succ B_z$ becomes $x_{t-1}$ and, consequently, from Proposition \ref{prop:SoftDominancePreferenceAction_},  $s^{u \ast}(a)=x_{t-1}$.
	
        All the other possible cases are the following:
		
        \begin{enumerate}
		      \item $x_k > s^u(a)$: since after removing $b_{kp}$ the lowest $z$ such that $B_z \succ a$ remains $s^u(a)$
                    and  Points $a.2.p$ and $a.2.d$ of Condition \ref{cond:Separability} are still satisfied, by Proposition \ref{prop:SoftDominancePreferenceAction_},  we obtain $s^{u \ast}(a)=s^{u}(a)$;
			 \item  $x_k = s^u(a)$, and not($b_{kp} \succ a$) or there is at least another $b_{kq} \in B_{x_k}$ such that
                    $b_{kq} \succ a$: in this case, after removing the reference action $b_{kp}$, we continue to have $B_{s^u(a)} \succ a$, and, therefore, $s^{u \ast}(a)=s^{u}(a)$;
			\item $x_k=x_{t-1}$, but not($a \succ b_{kp}$) or there is at least another $b_{kq} \in B_{x_{t-1}}$ such that
                $a \succ b_{kq}$, or there does not exist any reference action $b_{kr} \in B_{x_{t-1}}$ such that $b_{kr} \succ a$: in this case, after removing the reference action $b_{kp}$, we continue to have not($B_{x_{t-1}} \succ a$), in such a way that the lowest $z$ such that $B_z \succ a$ remains $s^u(a)$, which leads to $s^{u \ast}(a)=s^{u}(a)$;
		\item $x_k<x_{t-1}$:  since not($B_{x_{t-1}} \succ a$) and since after removing $b_{kp}$ Points $a.2.p$ and $a.2.d$
             of Condition \ref{cond:Separability} are still satisfied, then, by Proposition \ref{prop:PropertiesActionsSets}, we obtain  not($B_{x_k} \succ a$), in such a way that the lowest $z$, such, that $B_z \succ a$ remains $s^u(a)$, which leads to $s^{u \ast}(a)=s^{u}(a)$.
			\end{enumerate}
	\end{proof}

    Putting together the results of Lemmas \ref{theo:InsertingSet_}-\ref{theo:DeletingSet} we get the following general result.

    \begin{theorem}[Stability]\label{theo:MainTheorem}
         If $B$ fulfills both the primal and the dual soft dominance separability conditions (see Points $a.2.p$ and $a.2.d$ of Condition \ref{cond:Separability}) then the stability condition holds.
    \end{theorem}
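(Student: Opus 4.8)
The plan is to prove stability by a direct case analysis over the four elementary operations of Definition \ref{def:InsertionDeletingOperations}, treating the lower and the upper bound separately and invoking in each case the matching lemma established above. The eight preceding lemmas are organised precisely so that each one fixes a single operation (inserting a set, inserting an action, deleting a set, or deleting an action) and reports the exact value taken by one bound, $s^{l\ast}(a)$ or $s^{u\ast}(a)$, after that operation. Since all of these lemmas already presuppose that $B$ satisfies the primal and dual soft dominance conditions and that these conditions survive the operation (guaranteed on deletion by Remark \ref{rem:DeletingSet}, and imposed by hypothesis on insertion), the only remaining task is to verify that every value they permit falls inside the intervals demanded by the stability requirement of Definition \ref{def:StructuralRequirements}.

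First I would align the index notation: in Definition \ref{def:StructuralRequirements} the current bounds are written $x_r=s^l(a)$ and $x_s=s^u(a)$, whereas the lemmas use $s^l(a)=x_s$ and $s^u(a)=x_t$; after relabelling, the target becomes $x_{r-1}\leqslant s^{l\ast}(a)\leqslant x_{r+1}$ together with $x_{s-1}\leqslant s^{u\ast}(a)\leqslant x_{s+1}$. Then I would sweep through the four operations. For the insertion of a set, the lower-bound lemma yields either $s^{l\ast}(a)=s^l(a)=x_r$ or an intermediate value $x$ with $x_r<x<x_{r+1}$, and symmetrically the upper-bound lemma yields either $s^{u\ast}(a)=x_s$ or a value $x$ with $x_{s-1}<x<x_s$. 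For the deletion of a set, the two deleting-set lemmas give $s^{l\ast}(a)\in\{x_{r-1},x_r\}$ and $s^{u\ast}(a)\in\{x_s,x_{s+1}\}$. For the insertion and for the deletion of a single reference action, the four action-lemmas give $s^{l\ast}(a)\in\{x_{r-1},x_r,x_{r+1}\}$ and $s^{u\ast}(a)\in\{x_{s-1},x_s,x_{s+1}\}$.

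In every branch the new lower bound either equals one of $x_{r-1},x_r,x_{r+1}$ or lies in the open interval $(x_r,x_{r+1})$, hence belongs to $[x_{r-1},x_{r+1}]$; likewise the new upper bound either equals one of $x_{s-1},x_s,x_{s+1}$ or lies in $(x_{s-1},x_s)$, hence belongs to $[x_{s-1},x_{s+1}]$. These are exactly the two stability inequalities, so the requirement holds after every admissible single operation, and the theorem follows by exhausting the four cases.

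I expect the only delicate point to be bookkeeping rather than mathematics: one must confirm that the eight lemmas are jointly exhaustive, that is, that the four operations crossed with the two bounds genuinely cover Definition \ref{def:InsertionDeletingOperations}, and that each quoted value range is transcribed correctly under the relabelling of indices. The substantive content, namely that a single operation can shift either bound by at most one reference score, has already been discharged within the lemmas via Propositions \ref{prop:PropertiesActionsSets}, \ref{prop:SoftDominancePreferenceSet}, \ref{prop:SoftDominancePreferenceAction}, and \ref{prop:SoftDominancePreferenceAction_}, so no new estimates are required at this stage.
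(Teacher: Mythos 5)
Your proposal is correct and follows essentially the same route as the paper: the paper's own "proof" consists precisely of the remark that the theorem is obtained by "putting together the results" of the eight preceding lemmas, i.e., the same operation-by-operation, bound-by-bound case check you carry out explicitly. Your version merely makes the index relabelling ($x_s,x_t$ in the lemmas versus $x_r,x_s$ in the stability requirement) and the verification that each permitted value lies in $[x_{r-1},x_{r+1}]$ or $[x_{s-1},x_{s+1}]$ explicit, which the paper leaves implicit.
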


%%%%%%%%%%%%%%%%%%%%%%%%%%%%%%%%%%%%%%%%%%%%%%%%%%%%%%%%%%%%%%%%%%%%%%%%%%%%%%%%%%%%%%%%%%%%
%%%%%%%%%%%%%%%%%%%%%%%%%%%%%%%%%%%%%%%%%%%%%%%%%%%%%%%%%%%%%%%%%%%%%%%%%%%%%%%%%%%%%%%%%%%%
\section{Practical issues and an illustrative example}\label{sec:PracticalIssuesExample}
\noindent This section presents some practical aspects of the method and an illustrative example

\subsection{Building the set of reference actions}\label{sec:SetRefActions}
\noindent Let us start by observing that in defining the collection of sets of references actions
$$B = \{B_{x_1},\ldots,B_{x_k},\ldots,B_{x_\ell}\}$$
 it is necessary to take care that all actions $a$ from $A$ the following condition is satisfied
\begin{equation}\label{comparability}
B_{x_\ell} \succ a \succ B_{x_1}.
\end{equation}
Condition (\ref{comparability}) permits to compare all actions $a$ from $A$ with sets $B_x$ of reference actions, so that a range $]s^l(a), s^u(a)[$ for the score $s(a)$ is assigned to each $a \in A$.

\noindent There are different ways of building the set of reference actions. Next we present two different ones.

\paragraph{Direct method}
We ask the decision-maker to propose a set of reference actions that are used to characterize some scoring levels, for example, $20$, $40$, $60$, and $80$. Several reference actions can be proposed \emph{a priori} to characterize the same scoring level. We assume that the scales are bounded from below and from above in such a way that we can consider reference the actions $b_0$ and $b_{100}$, characterizing the scores of $0$ and $100$ with actions having, on each considered criterion, the lower bound performances and the upper bound performances, respectively.

\paragraph{A deck of cards based technique} We can also ask the decision-maker to propose a set of actions to be taken as reference actions and use the deck of cads method in a similar way as in \cite{FigueiraRoy02}. These actions must be ordered from the worst to the best, with the possibility of some ties. The decision-maker is required to put some blank cards between these equivalent sets of reference actions in order to assign a score to each equivalence class, on the basis of the extreme scores $0$ and $100$. The scores thus obtained are not necessarily equally spaced. It should be noticed that other values different from $0$ and $100$ can be used (this is also true when using the previous method).

\subsection{The method}\label{sec:Method}
\noindent The method is now presented in a very simple way. Remember that in order to assign a score $s(a)$ to the action $a \in A$, we should firstly choose a $\lambda$ cutting level to transform the fuzzy outranking relation into a crispy one.

\begin{enumerate}
    \item Find the set of reference actions, $B_x$, with the highest score $x$, such that $a \succ B_x$ and, for all $x^\prime < x$, we have either $a \succ B_{x^\prime}$ or $a \parallel B_{x^\prime}$. It is then natural to consider a score $s(a)$ such that $s(a) > x= s^l(a)$, and indeed, according Definition \ref{def:LowerBound}, we have $s^l(a)=x$. Let us remember also that by Proposition \ref{prop:SoftDominancePreferenceAction_}, if both the primal and the dual soft dominance separability conditions (see Condition \ref{cond:Separability}, points   $a.2.p$ and $a.2.d$) hold, then $s^l(a)$ is simply the highest value, such that $a\succ B_x$.
		
		\item Find the set of reference actions, $B_x$, with the lowest score $x$, such that $B_x \succ a$ and, for all $x^\prime > x$, we have either $B_{x^\prime} \succ a$ or $a \parallel B_{x^\prime}$. It is then natural to consider a score $s(a)$ such that $s(a) < x= s^u(a)$, and indeed, according Definition \ref{def:UpperBound}, we have $s^u(a)=x$. Let us remember also that by Proposition \ref{prop:SoftDominancePreferenceAction_}, if both the primal and the dual soft dominance separability conditions (see Condition \ref{cond:Separability}, points   $a.2.p$ and $a.2.d$) hold, then $s^u(a)$ is simply the lowest value, such that $B_x \succ a$.
\end{enumerate}

\subsection{Illustrative example}\label{sec:Example}
\noindent  We consider the example in \cite{FigueiraEtAl09} regarding the evaluation of some sites for the location of a new hotel. The set of criteria is as follows:

\begin{enumerate}
  \item \textit{Investment costs} (Scale unit: $K$\euro; Code: {\tt ICOST}; notation: $g_1$; preference direction: minimization). This criterion comprises the land purchasing costs as well as the costs for building the new hotel.
  \item \textit{Annual costs} (Scale unit: $K$\euro; Code: {\tt ACOST}; notation: $g_2$; preference direction: minimization). This criterion comprises the hotel operating costs.
  \item \textit{Recruitment} (Scale unit: verbal levels (seven); Code: {\tt RECRU}; notation: $g_3$; preference direction: maximization). This criterion models the possibility of recruiting workers.
  \item \textit{Image} (Scale unit: verbal levels (seven); Code: {\tt IMAGE}; notation: $g_4$; preference direction: maximization). This criterion models the perceptions of the clients about the district where the new hotel will be located.
  \item \textit{Access} (Scale unit: verbal levels (seven); Code: {\tt ACCES}; notation: $g_5$; preference direction: maximization). This criterion models the possibility of recruiting workers.
\end{enumerate}

The verbal scale used for the last three criteria comprises the following levels  (in between parenthesis we used a numerical code for each level): {\tt very bad}[1]; {\tt bad}[2]; {\tt rather bad}[3]; {\tt average}[4]; {\tt rather good}[5]; {\tt good}[6]; {\tt very good}[7].

There are five potential sites for the location of the new hotel. The performance table can be presented as follows (Table 1).

\begin{table}[htb!]
  \centering
    \begin{tabular}{cccccc}\hline
$a$ & {\tt ICOST} ($g_1$) & {\tt ACOST} ($g_2$) & {\tt RECRU} ($g_3$) & {\tt IMAGE} ($g_4$) & {\tt ACCES} ($g_5$)   \\ \hline
  $a_1$ & $13\;000$ & $3\;000$ & 4 & 4 & 4 \\
  $a_2$ & $15\;000$ & $2\;500$ & 6 & 2 & 7 \\
  $a_3$ & $10\;900$ & $3\;400$ & 6 & 6 & 1 \\
  $a_4$ & $15\;500$ & $3\;500$ & 6 & 6 & 6 \\
  $a_5$ & $15\;000$ & $2\;600$ & 6 & 1 & 2 \\ \hline
    \end{tabular}
  \caption{Performance table}\label{tab:Perf_Table}
\end{table}

The weights, discriminating (indifference and preference) thresholds used in the method are the following (see Table 2).  Let us consider an ordered pair of actions $(a,b) \in A \times A$. The performance of $b$ is assumed to be worse than the performance of $a$. This means that the variable thresholds presented in the next table with respect to criteria $g_1$ and $g_2$ are direct variable thresholds \citep[see][]{RoyEtAl14}. For the sake of simplicity, no veto thresholds are considered in this example. For the remaining criteria the thresholds are constant (the numbers represent the differences of levels, not the scale levels. Thus, the indifference threshold is a difference of one performance level, while the preference threshold corresponds to a difference of two performance levels).

\begin{table}[htb!]
  \centering
    \begin{tabular}{cccccc}\hline
  Parameters & {\tt ICOST} ($g_1$) & {\tt ACOST} ($g_2$) & {\tt RECRU} ($g_3$) & {\tt IMAGE} ($g_4$) & {\tt ACCES} ($g_5$)   \\ \hline
  $k_j$         & 5 & 4 & 3 & 3 & 3 \\ \hline
  $q_j(g_j(b))$ & $250+0.03g_1(b)$ &  $\;\,50+0.05g_2(b)$ & 1 & 1 & 1 \\
  $p_j(g_j(b))$ & $500+0.05g_1(b)$ & $100+0.07g_2(b)$ & 2 & 2 & 2 \\ \hline
    \end{tabular}
  \caption{Parameters table}\label{tab:Parameters_Table}
\end{table}

Our set of limiting profiles is composed of seven subsets, i.e., $B = \{B_{x_1}, B_{x_2}, B_{x_3},B_{x_4},B_{x_5},B_{x_6},$ $B_{x_7}\}$. This means that will be defined seven reference values $X=\{x_1,x_2,x_3,x_4,x_5,x_6, x_7 \}$. These values were obtained with the deck of cards method proposed in \cite{BotteroEtAl18} to build interval scales, by fixing two reference levels. In this work, and without loss of generality, we will put $x_1=0$ and $x_7=100$.

The $B$ sets are characterized by at least a single limiting profile as follows.

\begin{itemize}
  \item[~] $B_{x_1=0} = \{b_{11} = (18\,000, 4\,000, 1, 1, 1)\}$;
  \item[~] $B_{x_2} = \{b_{21} = (17\,000, 3\,500, 2, 2, 1), \; b_{22} = (16\,500, 3\,700, 1, 2, 1)\}$;
  \item[~] $B_{x_3} = \{b_{31} = (15\,350, 3\,200, 3, 1, 2)\}$;
  \item[~] $B_{x_4} = \{b_{41} = (14\,250, 2\,850, 3, 4, 3), \; b_{42} = (13\,750, 3\,150, 4, 3, 3)\}$;
  \item[~] $B_{x_5} = \{b_{51} = (12\,650, 2\,650, 4, 4, 5)\}$;
  \item[~] $B_{x_6} = \{b_{61} = (11\,500, 2\,100, 5, 6, 5), \; b_{62} = (11\,000, 2\,500, 6, 5, 7)\}$;
  \item[~] $B_{x_7=100} = \{b_{71} = (10\,000, 2\,000, 7, 7, 7)\}$;
\end{itemize}

The deck of cards method applied to this problem works as follows.

\begin{enumerate}
  \item The subsets of reference profiles are totally ordered as follows ($\prec$ means ``strictly less preferred than''):
    \[
        B_{x_1} \prec B_{x_2} \prec B_{x_3} \prec B_{x_4} \prec B_{x_5} \prec B_{x_6} \prec B_{x_7}.
    \]
  \item We then call the attention of the decision-maker to the fact that if the difference between two consecutive sets is bigger than the difference of other pair of consecutive subsets, she/he should add more blank cards in the first two consecutive ones than in the second ones \citep[for more details see][]{BotteroEtAl18}. We may obtain the following ranking of the subsets with the blank cards in between consecutive ones (in brackets):
       \[
        B_{x_1} \; [1] \; B_{x_2} \; [2] \; B_{x_3} \; [0] \; B_{x_4} \; [1] \; B_{x_5} \; [0] \; B_{x_6} \; [2] \; B_{x_7}.
        \]
    Zero blank cards between two consecutive sets of limiting profiles does not mean that the limiting profiles of the two consecutive sets have the same value, but only that the difference is minimal. We know that the number of units between $B_{x_1}$ and $B_{x_7}$ is $\alpha = (1+1) + (2+1) + (0+1) + (1+1) + (0+1) + (2+1) = 12$.
  \item In this step we should define the two reference levels. As stated before, we considered that the value of all the limiting profiles in $B_{x_1}$ is $0$, i.e., $x_1=0$ and that the value of all the limiting profiles in $B_{x_7}$ is $100$, i.e., $x_7=100$.
  \item Consequently, we can compute the value of the unit as follows.
    \[
        u = \frac{x_7 - x_1}{h} = \frac{100-0}{12}= 8.333333.
    \]
  \item The computations of the remaining values in $X$ is easy since we know the number of units separating two consecutive sets of limiting profiles. We have thus,
      \[
            x_1=0, \; x_2= 25, \; x_3= 33.33333333, \; x_4 = 50, \; x_5 = 58.33333333, \; x_6 = 83.33333333,\; x_7=  100.
      \]
\end{enumerate}

\begin{comment}
Note that it will be possible to compute a precise score for each action $a \in A$ whenever the decision-maker will be able to compare it to the limiting profiles of sets that were used for defining the bounds of the range for the score of $a$ and insert blank cards between the action and such limiting profiles. Suppose the the range of $a$ is given by $\underline{x} \leqslant s(a) \leqslant \bar{x}$. Thus, if it will be possible to insert blank cards between $B_{\underline{x}}$ and $a$ (this number begin denoted $e_{\underline{x}}$) and $a$ and $B_{\bar{x}}$ (this number being denoted by $e_{\bar{x}}$), we will have,

\[
    B_{\underline{x}} \; [e_{\underline{x}}] \; a \; [e_{\bar{x}}]\; B_{\bar{x}}.
\]

We have that the number of blank cards is $h=(e_{\underline{x}} + 1 + e_{\bar{x}} +1)$ and the value of the unit $u =(\bar{x} - \underline{x})/h$. Thus, the score of $a$ becomes

\[
    s(a) = \underline{x} + u(e_{\underline{x}} + 1).
\]

\textcolor{red}{NOTE by Salvo: This is interesting. However, it is necessary to check the consistency with respect to monotonicity, that is, thre must be no pair of actions such that a dominates b but b obtain a greater value. We have to be aware also that we can have that $a$ is preferred to $b$ ($a \succ b$), but $b$ could get a greater value. Unfortunately it is quite difficult to find conditions to avoid such a problem, because $\succ$ is not transitive.}

\end{comment}

Let us consider now the comparison table of all of our five sites against the limiting profiles (Table 3).

\begin{table}[htb!]
  \centering
    \begin{tabular}{|c|ccccc|ccccc|}\hline
    $b$    & $a_1\succ b$ &  $a_2\succ b$ & $a_3\succ b$ & $a_4\succ b$ & $a_5\succ b$ & $b \succ a_1 $ &  $b \succ a_2$ & $b \succ a_4$ & $b \succ a_4$ & $b \succ a_5$ \\ \hline
    $b_{11}$ & $\succ$ & $\succ$ & $\succ$ & $\succ$ & $\succ$ &         &         &         &         & \\ \hdashline
    $b_{21}$ & $\succ$ & $\succ$ & $\succ$ & $\succ$ & $\succ$ &         &         &         &         & \\
    $b_{22}$ & $\succ$ & $\succ$ & $\succ$ & $\succ$ & $\succ$ &         &         &         &         & \\ \hdashline
    $b_{31}$ & $\succ$ & $\succ$ & $\succ$ & $\succ$ & $\succ$ &         &         &         &         & \\ \hdashline
    $b_{41}$ &         & $\succ$ &         & $\succ$ & $\succ$ &         &         &         &         & \\
    $b_{42}$ &         & $\succ$ & $\succ$ &         &         &         &         &         &         & \\ \hdashline
    $b_{51}$ &         &         &         &         &         &         &         &         & $\succ$ & $\succ$ \\ \hdashline
    $b_{61}$ &         &         &         &         &         & $\succ$ & $\succ$ & $\succ$ & $\succ$ & $\succ$ \\
    $b_{62}$ &         &         &         &         &         & $\succ$ & $\succ$ & $\succ$ & $\succ$ & $\succ$ \\ \hdashline
    $b_{71}$ &         &         &         &         &         & $\succ$ & $\succ$ & $\succ$ & $\succ$ & $\succ$ \\  \hline
    \end{tabular}
  \caption{Comparison table}\label{tab:Perf_Table}
\end{table}

Now, for the definition of the score range of each alternative we can take advantage of this table. Let us consider action $a_1$ and try to identify $s^l(a_1) < s(a) < s^u(a_1)$. As for the lower bound it is provided by $x_3$ and the upper bound by $x_6$. Thus, the range for score of $a_1$ becomes.

\[
    33.33333 < s(a_1) < 83.33333.
\]

With the same procedure we can derive the range for all the five actions:

\begin{itemize}
  \item[~] $33.33333 < s(a_1) < 83.33333$;
  \item[~] $50.00000 < s(a_2) < 83.33333$;
  \item[~] $50.00000 < s(a_3) < 83.33333$;
  \item[~] $33.33333 < s(a_4) < 58.33333$;
  \item[~] $33.33333 < s(a_5) < 58.33333$;
\end{itemize}

%From this table we can construct robust conclusions. For example, we can say that the score of action $a_2$ cannot be lower than $50$. This is very robust, since we know that this action is always placed in above the middle of the scale.
%

\begin{comment}

If we want a precise value for $a_2$ by using our technique we need to compare it against the limiting profiles in $B_{x_5}$ and $B_{x_6}$. Assume that is possible to insert the following blank cards. \textcolor{red}{Note Salvo: We should introduce, describe and investigate this procedure in the theoretical part of the paper.}

    \[
         B_{x_5} \; [3] \; a_2 \; [1] B_{x_6}.
    \]

Then, $h = (3+1) + (1+1) = 6$ and $u = (83.33333 - 50)/6 = 5.55555$. Thus,

    \[
        s(a_2) = 50 + 4\times 5.55555 = 72.22222.
    \]

If action $a_2$ fulfills the separability condition it can be  a good candidate to reinforce the set of reference actions.

\end{comment}

\section{Conclusions}\label{sec:Conclusion}
\noindent In this paper we presented a new method for assigning a score range to each action. In this sense it can provide a more robust conclusion about the possible scores such an action can have.  The theoretical soundness of the method has been proved, i.e., the fundamental requirements of uniqueness, independence, monotonicity, conformity, homogeneity, and stability with respect to insertion and deletion operations, became properties of the method, which provide some consistency to  the method.

All the main strengths of {\sc{Electre}} methods are present in this new method: it deals with different types of scales without the need of converting them into a single unit; it is able to cope with the imperfect knowledge of data and arbitrariness when building the criteria; it takes into account the reasons for and against an outranking; and, it avoids the compensatory phenomenon in a systematic way. In addition, the method is able to provide a score range for each action, which, before was considered a weak point of {\sc{Electre}} methods.

Future research lines, start from software development and real-work applications. They include also several extensions, as for example,
\begin{itemize}
	\item the design of a hierarchical {\sc{Electre-Score}} method, as in \cite{CorrenteEtAl13, CorrenteEtAl16} and \cite{DelVastoTeriientesEtAl15}, allowing in our case to give ranges of scores for different macrocriteria in the hierarchical trees of criteria;
	\item the application of Monte-Carlo based pseudo-robustness analysis in the same line as in \cite{CorrenteEtAl17} to provide more accurate score ranges;
	\item the use of robust ordinal techniques and other robustness techniques in the same line of \cite{GrecoEtAl10} and  \cite{KadzinskiCi16};
	\item the development evolutionary approaches for making the computations when in presence of large sets of data as in \cite{DoumposEtAl09}.
\end{itemize}

 \vspace{0.5cm}

\section*{Acknowledgements}
\addcontentsline{toc}{section}{\numberline{}Acknowledgements}
\noindent Jos\'e Rui Figueira acknowledges the support  the Funda\c c\~ao para a Ci\^encia e Tecnologia (FCT) grant SFRH/BSAB/139892/2018 under POCH program and from the European Regional Development Fund (ERDF) as well as FCT for supporting the DECIdE project (LISBOA-01-0145-FEDER-024135), and for partially funding the current research. Salvatore Greco wishes to acknowledge funding by the FIR of the University of Catania BCAEA3 ``New developments in Multiple Criteria Decision Aiding (MCDA) and their application to territorial competitiveness''.

%% from the hSNS FCT – Research Project  (02/SAICT/2017/30546), and

\vfill\newpage

~~

\vspace{-3.00cm}

\renewcommand{\thesection}{\Alph{section}}
\setcounter{section}{0}
\section{Appendix}\label{sec:Appendix_Parameters}
%%\addcontentsline{toc}{section}{Appendix A}
\noindent In this Appendix we will present the main concepts and steps that lead to the construction of a credibility degree, $\sigma(a,b)$ for the pair of actions $(a,b)$.

State-of-the art versions of {\sc{Electre}} methods make use of the so-called \emph{pseudo-criterion} model \citep{FigueiraEtAl16,Roy96,RoyBo93,RoyVi84} when comparing two actions, $a$ and $b$, on criterion $g_j$, from their performances, $g_j(a)$ and $g_j(b)$, respectively. This model associates with each criterion function, $g_j(\cdot)$, two threshold functions: an \emph{indifference threshold} function, denoted by $q_j(\cdot)$, and a \emph{preference threshold} function, denoted by $p_j(\cdot)$. Assume that $g_j$ is a criterion to be maximized and that the performance $g_j(a)$ is better than the performance $g_j(b)$. The threshold functions or simply thresholds can be constant or may vary in a direct way, i.e., with respect to the worst performance, $q_j(g_j(b))$ and $p_j(g_j(b))$, or in an inverse way, i.e., with respect to the best performance, $q_j(g_j(a))$ and $p_j(g_j(a))$. For the sake of simplicity and without loss of generality, we consider in what follows that the thresholds are constant and use the simple notation $q_j$ and $p_j$, for the indifference and preference thresholds, respectively.

It is very important to note that the main purpose of these thresholds is not to model the preferences, but rather the imperfect knowledge of data as it can be seen in \cite{RoyEtAl14}.

The choice of a pseudo-criterion model for the comparison of two actions, $a$ and $b$, from their performances on criterion $g_j$, leads to the definition of three \emph{per}-criterion binary relations, as follows.

\begin{itemize}[label={--}]
    \item A per-criterion \emph{indifference binary relation}, which is used to model a situation in which $a$ is indifferent to $b$ on criterion $g_j$, denoted by $a\sim_j b$; this occurs whenever $\vert (g_j(a) - g_j(b)) \vert \leqslant q_j$. In other words, a situation where no one of the two actions, $a$ and $b$, has a significant advantage over the other on the considered criterion. Let $C(a\sim b)$ denote the set or coalition of criteria for which $a$ is indifferent to $b$.
    \item  A per-criterion \emph{strict preference binary relation}, which is used to model a situation in which $a$ is strictly preferred to $b$ on criterion $g_j$, denoted by $a\succ_j b$; this occurs whenever $(g_j(a) - g_j(b)) > p_j$. In other words, a situation where action $a$ has a significant advantage over $b$ on the considered criterion. Let $C(a\succ b)$ denote the set or coalition of criteria for which $a$ is strictly preferred to $b$.
    \item A per-criterion \emph{weak preference binary relation}, which is used to model hesitation situations of $a$ with respect to $b$ on criterion $g_j$, denoted by $a\succsim^{?}_{j} b$; this occurs whenever $q_j < (g_j(a) - g_j(b)) \leqslant p_j$. In other words, a situation where there is an ambiguity zone between indifference and strict preference of $a$ over $b$ on the considered criterion. Let $C(a\succsim^{?} b)$ denote the set or coalition of criteria for which $a$ is weakly preferred to $b$. Note that the word weak has nothing to do with intensities of preferences, it models hesitation or ambiguity (due to the imperfect knowledge of data), not preferences.
\end{itemize}

Whenever $a$ is indifferent, weak, or strict preferred to $b$, on criterion $g_j$, we say that ``$a$ outranks $b$'' since $a$ is at least as good as $b$ in a \emph{stricto sensu} on this criterion. This situation thus occurs, when $a\sim_j b$, $a\succ_j b$, or $a\succsim^{?}_j b$, and can be denoted by $a\succsim_j b$. In a more \emph{lato sensu} we can also say that ``$a$ outranks $b$'' when $b\succsim^{?}_j a$ since there is hesitation between $b\sim_j a$ and $b\succ_j a$ on criterion $g_j$.

As all outranking based methods, {\sc{Electre}} methods also make use of the \emph{per}-criterion outranking relations to build one or several comprehensive outranking relations. In the method proposed in this paper only one comprehensive outranking relation is considered, which allows to conclude whether or not ``$a$ comprehensively outranks $b$'', denoted by $a\succsim b$.  More precisely, to conclude about the assertion ``$a$ outranks $b$'', the strength of the coalition in its favor of should be powerful enough to overcome the opposition effect of the coalition against this assertion. How should the power of the coalition in favor (or concordant with the assertion) and the effects of the coalition against (or discordant with the assertion) be measured?  To model the power of the concordant coalition is modeled and measured in {\sc{Electre}} methods through what is called in these methods a \emph{comprehensive concordance index}, while the opposition effect of each criterion is modeled and measured to what is called a \emph{per-criterion discordance index}. Both will be combined to devise a \emph{credibility} (outranking) \emph{index} for each ordered pair of alternatives, $(a,b) \in A\times A$. Next, we will present the three main steps to obtain the credibility index.

\begin{enumerate}
    \item Computing the \emph{comprehensive concordance index} $c(a,b)$. Again, for the sake of simplicity, the formula we present next, for this index, is the classical one as in \cite{RoyBo93}. A more sophisticated and recent version of the concordance index, which takes into account the interaction between criteria can also be used in this context \citep[see][]{FigueiraEtAl09} with no additional changes in the method proposed in this paper. As said before the concordance index is used for measuring the power of the concordant coalition, where each criterion $g_j$ contributes with its relative importance coefficient or weigh, $w_j$, for $j=1,\ldots,n$ (we assume w.l.o.g. that $\sum_{j=1}^nw_j=1$). The formula for the index can thus be stated as follows.
        \[
            {\displaystyle c(a,b) = \sum_{C(a\{\sim,\succsim^?,\succ\}b)}w_j + \sum_{C(b\succsim^? a)}\varphi_jw_j},
        \]
        where
        \[
            {\displaystyle \varphi = \frac{(g_j(a) - g_j(b)) + p_j}{p_j - q_j} \in [0,1]}.
        \]
        This means that if a criterion $g_j$ belongs to the concordant coalition \emph{stricto sensu}, i.e., $g_j \in C(a\{\sim,\succsim^?,\succ\}b$ its contribution to the coalition power corresponds to its total weight, $w_j$, but if this criterion belongs to $C(b\succsim^? a)$ it only contributes with a fraction of its weight, $\varphi_jw_j$.
    \item Computing the \emph{per-criterion discordance indices} $d_j(a,b)$, $j=1,\ldots,n$. To model the opposition effect of each criterion against the concordant coalition \emph{lato sensu}, i.e., when $g_j \in C(b\succ a)$, it is necessary to introduce another concept and preference parameter, the \emph{veto threshold} $v_j(\cdot)$. This threshold can also be constant or vary in a direct or indirect way as in case of indifference and preference thresholds. To render things simple and without loss of generality we will keep its value constant, and denoted it simply by $v_j$. A criterion $g_j$ is discordant with the assertion ``$a$ outranks $b$'', when the difference of performances $(g_j(b) - g_j(a))$ is considered significantly large to validate such an assertion.   The more or less degree of discordance of each criterion can be measured through a \emph{per}-criterion discordance  index of the form,
         \[
            d_j(a,b) =
            \left\{
            \begin{array}{ccl}
            1                         & \mbox{if} & \;\;\,v_j > (g_j(a) - g_j(b))  \\
         {\displaystyle \frac{(g_j(a) - g_j(b)) + p_j}{p_j - v_j}} & \mbox{if} & -v_j \leqslant (g_j(a) - g_j(b)) < -p_j\\
            0                                               & \mbox{if} & \hspace{1.15cm}(g_j(a) - g_j(b)) \geqslant -p_j \\
            \end{array}
            \right.
        \]
    \item Computing the \emph{credibility index} $\sigma(a,b)$. This index measures the credibility degree of the outranking relation, i.e., the degree in which $a$ outranks $b$. It can be modeled though the following formula.
        \[
            {\displaystyle \sigma(a,b) = c(a,b)\prod_{j=1}^{n}T_j(a,b)},
        \]
        where
        \[
             T_j(a,b) =
             \left\{
             \begin{array}{cl}
                {\displaystyle  \frac{1 - d_j(a,b)}{1 - c(a,b)}} & \mbox{if} \;\, g_j(a,b) > c(a,b) \\
                 1 & \mbox{otherwise} \\
             \end{array}
             \right.
        \]
        It is thus a fuzzy measure. It can be converted into a crispy by making use of a cutting-off level, denote by $\lambda$ as in Section \ref{sec:Concepts}.
\end{enumerate}

For the main features, advantages, and drawbacks of {\sc{Electre}} the reader can refer to \cite{FigueiraEtAl13}.

\vspace{-0.25cm}

%% \setlength{\fboxsep}{0pt}
%% $\succsim^?$, $\starcupN$, $\succsim^{\footnotesize{\fbox{\mbox{$?$}}}}$

%%\vfill\newpage
%%\section{Appendix}\label{sec:Appendix_Tab}
%%\addcontentsline{toc}{section}{Appendix B}
\noindent

\vfill\newpage

\bibliographystyle{model2-names}
\bibliography{Bib_Score}
\addcontentsline{toc}{section}{References}

%% Authors are advised to submit their bibtex database files. They are
%% requested to list a bibtex style file in the manuscript if they do
%% not want to use model2-names.bst.

%% References without bibTeX database:

% \begin{thebibliography}{00}

%% \bibitem must have one of the following forms:
%%   \bibitem[Jones et al.(1990)]{key}...
%%   \bibitem[Jones et al.(1990)Jones, Baker, and Williams]{key}...
%%   \bibitem[Jones et al., 1990]{key}...
%%   \bibitem[\protect\citeauthoryear{Jones, Baker, and Williams}{Jones
%%       et al.}{1990}]{key}...
%%   \bibitem[\protect\citeauthoryear{Jones et al.}{1990}]{key}...
%%   \bibitem[\protect\astroncite{Jones et al.}{1990}]{key}...
%%   \bibitem[\protect\citename{Jones et al., }1990]{key}...
%%   \harvarditem[Jones et al.]{Jones, Baker, and Williams}{1990}{key}...
%%

% \bibitem[ ()]{}

% \end{thebibliography}

\end{document}